\documentclass{aims}

%%%%%%%%%%%%%%%%%%%%%%%%%%%%%%%%%%%%%%%%%%
\usepackage{txfonts}

\numberwithin{equation}{section}

\makeatletter
\renewcommand{\@biblabel}[1]{#1\hfill \hspace{-0.2cm}}
\makeatother

%%%% Packages
\usepackage{amsmath,amssymb}
\usepackage{amsthm}

%%%% Commands
\newcommand{\x}{\mathbf{x}}
\newcommand{\y}{\mathbf{y}}
\newcommand{\z}{\mathbf{z}}
\newcommand{\bu}{\mathbf{u}}

\newcommand{\bP}{\mathbf{P}}
\newcommand{\cP}{\mathcal{P}}
\newcommand{\bT}{\mathbf{T}}
\newcommand{\bQ}{\mathbf{Q}}
\newcommand{\bI}{\mathbf{I}}
\newcommand{\bC}{\mathbf{C}}
\newcommand{\bD}{\mathbf{D}}
\newcommand{\bL}{\mathbf{L}}
\newcommand{\bU}{\mathbf{U}}
\newcommand{\bX}{\mathbf{X}}
\newcommand{\bY}{\mathbf{Y}}
\newcommand{\bA}{\mathbf{A}}
\newcommand{\bB}{\mathbf{B}}

\newtheorem{theorem}{Theorem}[section]
\newtheorem{lemma}[theorem]{Lemma}
\newtheorem{corollary}[theorem]{Corollary}
\newtheorem{example}[theorem]{Example}
\newtheorem{definition}{Definition}[section]

\tolerance=1
\emergencystretch=\maxdimen
\hyphenpenalty=10000
\hbadness=10000
\usepackage{cite}
\setcounter{page}{1}

\begin{document}

\title{A Note on Equivalent Conditions for Majorization}

\author{%
  Roberto Bruno
  and
  Ugo Vaccaro\corrauth
}

% \shortauthors is used in copyright information in the end of the paper
\shortauthors{the Author(s)}

\address{%
   {Department of Computer Science,
    University of Salerno, via Giovanni Paolo II, 132, 84084 Fisciano (SA), Italy}
  }

% corresponding author
\corraddr{Email: uvaccaro@unisa.it; Tel: +39089969734;
 Fax: +39089969600.
}

\begin{abstract}
{
In this paper, we  introduced novel characterizations of
the classical concept of majorization in terms of upper triangular (resp., 
lower triangular) row-stochastic matrices, and in terms of
sequences of
linear transforms on 
vectors.
We used our new characterizations of 
 majorization 
to derive an improved entropy inequality.  }
\end{abstract}

\keywords{
{majorization; entropy; row-stochastic matrices; column-stochastic matrices; Schur-concave functions}
\newline
\textbf{Mathematics Subject Classification:} 94A17, 15A45, 39B62, 47A63}

\maketitle

%%%%%%%%%%%%%%%%%%%%%%%%%%%%%%%%%%%%%%%%%%%%%%%%%%
\section{Introduction}

The concept of majorization has a rich history in mathematics with applications that span a wide range of disciplines. Majorization theory originated in economics
\cite{AS}, where it was employed to
rigorously explain the vague notion that the components of a given vector are
``more nearly equal" than the components of a different vector. Nowadays, majorization theory
finds applications in numerous areas, ranging from pure mathematics  to combinatorics
\cite{marshall1979inequalities,HA,Man},
from information
and communication theory \cite{M+,Mu, Sa1, Sa, cv, Wi, PY, JB, W} to thermodynamics 
and quantum theory
\cite{bb,S}, from mathematical chemistry \cite{bi} to optimization \cite{da}, and much more.

There are many equivalent conditions for majorization. We review the 
most common ones in Section~\ref{pre}.  Successively, 
we present our new conditions for majorization 
in Sections~\ref{sec:L} and \ref{sec:upper}. Finally, in Section~\ref{appl}, we present
an application to entropy inequalities.

%%%%%%%%%%%%%%%%%%%%%%%%%%%%%%%%%%%%%%%%%%%%%%%%%%

\section{Preliminaries}\label{pre}
Throughout the paper, we will consider vectors $\x=(x_1, \ldots ,x_n)\in 
\mathbb{R}_+^n$ 
ordered component-wise, that is, for which $x_1\geq \ldots \geq x_n$.

\begin{definition}{\normalfont \cite[Def. A.1]{marshall1979inequalities}}
    For $\x,\y \in \mathbb{R}_+^n$,  
    \begin{align}\label{eq:x<y}
        \x \prec \y \mbox{ if }
        \begin{cases}
            \displaystyle \sum_{i=1}^k x_i \leq \sum_{i=1}^k y_i, \hspace{0.5cm} k=1,\ldots,n-1,\\ 
         \displaystyle\sum_{i=1}^n x_i = \sum_{i=1}^n y_i.    
        \end{cases}
    \end{align}
\end{definition}
\noindent
When $ \x \prec \y$, we say that vector $\x$ is \textit{majorized} by $\y$ 
(equivalently, that $\y$ majorizes $\x$). 
\smallskip

There are many equivalent conditions for majorization 
(e.g., see \cite{marshall1979inequalities}, Chapter 4). 
The conditions that are more closely related to the subject matter of our paper
are expressed in terms of doubly stochastic matrices and $T$-transforms.
\begin{definition}\label{doubly}
    An $n\times n$ matrix $\bP=[P_{ij}]$ is \textit{doubly stochastic} if
    \begin{equation*}
        P_{ij}\geq 0 \hspace{0.5cm} \forall i,j \in \{1,\dots,n\},
    \end{equation*}
    and 
    \begin{align*}
        \sum_{i=1}^n P_{ij} &= 1, \hspace{0.5cm} j=1,\dots,n;\\
        \sum_{j=1}^n P_{ij} &= 1, \hspace{0.5cm} i=1,\dots,n.
    \end{align*}
\end{definition}
\begin{definition}\label{T}
    An $n\times n$ matrix $\bT$ is  a \textit{{$\bT$}-transform} if
    \begin{equation*}
\bT = \lambda \bI + (1-\lambda)\bQ, 
    \end{equation*}
where $0\leq \lambda <1$, $\bI$ is the $n\times n$ identity matrix, and
$\bQ=[Q_{\ell m}]$ is a permutation matrix such that
\begin{equation}\label{Q}
    Q_{\ell m}=\begin{cases}
1 & \ \mbox{ for } \ell=m, \mbox{ and } \ell,m\notin\{i,j\} \\
1 &\ \mbox{ for } \ell=j \mbox{ and } m=i\\
1 & \ \mbox{ for } \ell=i \mbox{ and } m=j\\
0 & \ \mbox{otherwise,}
    \end{cases}
\end{equation}
for some indices $i,j\in \{1, \ldots , n\}$, $i\neq j$.
\end{definition}

Thus, for an arbitrary   $\x=(x_1, \ldots x_n)\in \mathbb{R}_+^n$,  it holds that the vector $\x\bT$ has the form
\begin{align}
    \x\bT = (x_1,\dots,x_{i-1}, \lambda x_i + (1-\lambda)x_j,x_{i+1},\dots, x_{j-1},
    \lambda x_j + (1-\lambda)x_i,x_{j+1},\dots,x_n).\label{Tform}
\end{align}
Notice that each $\bT$-transform is a doubly stochastic matrix.
It holds that:
\begin{theorem}{\normalfont\cite[Proposition B.6, Ch. 2]{marshall1979inequalities}}
    \label{th:equivalence_T}
    For any  \ $\x,\y \in \mathbb{R}^n$, the following conditions are equivalent:
    \begin{enumerate}
        \item[(1)] $\x \prec \y$;
        \item[(2)] $\x = \y\bP$ for some doubly stochastic matrix $\bP$;
        \item[(3)] $\x$ can be derived from $\y$ by successive applications of at most $n-1$
 $\bT$-transforms,  as described in Definition \ref{T}.
    \end{enumerate}
\end{theorem}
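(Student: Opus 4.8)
The plan is to prove the cycle of implications $(1)\Rightarrow(3)\Rightarrow(2)\Rightarrow(1)$, so that all the real work is concentrated in the single arrow $(1)\Rightarrow(3)$, while the other two are routine.

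First I would dispose of the two short directions. For $(3)\Rightarrow(2)$: each $\bT$-transform is doubly stochastic, and products of doubly stochastic matrices are again doubly stochastic (nonnegativity and the unit row and column sums pass to products); hence if $\x=\y\bT_1\cdots\bT_m$ then $\bP:=\bT_1\cdots\bT_m$ does the job. For $(2)\Rightarrow(1)$: write $x_i=\sum_j y_jP_{ji}$; summing over $i$ and using the row sums of $\bP$ gives $\sum_i x_i=\sum_j y_j$, and for a prefix sum I set $t_j:=\sum_{i=1}^k P_{ji}\in[0,1]$, so that $\sum_j t_j=k$ and $\sum_{i=1}^k x_i=\sum_j t_j y_j$. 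It then suffices to note that, for decreasing $\y$, $\sum_j t_j y_j\le\sum_{i=1}^k y_i$: indeed $\sum_{i=1}^k y_i-\sum_j t_j y_j=\sum_{i\le k}(1-t_i)y_i-\sum_{i>k}t_i y_i\ge y_k\big(\sum_{i\le k}(1-t_i)-\sum_{i>k}t_i\big)=y_k(k-\sum_j t_j)=0$, using $y_i\ge y_k$ for $i\le k$, $y_i\le y_k$ for $i>k$, and $1-t_i,t_i\ge0$.

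The substance is $(1)\Rightarrow(3)$, which I would handle by the classical ``transfer'' (Robin Hood) argument together with a step count. Suppose $\x\prec\y$ with $\x\neq\y$. Since the prefix sums of $\x$ lie below those of $\y$ while the totals agree, along the set of indices where the two vectors differ the sign of $x_i-y_i$ is negative at the first such index and positive at the last, so at some adjacent pair of disagreement indices it flips from $-$ to $+$; this yields $k<\ell$ with $x_k<y_k$, $x_\ell>y_\ell$ and $x_i=y_i$ for every $k<i<\ell$. Monotonicity then gives $y_k>x_k\ge x_\ell>y_\ell$. Put $\delta:=\min\{y_k-x_k,\ x_\ell-y_\ell\}>0$ and let $\bT$ be the $\bT$-transform on the pair $\{k,\ell\}$ with $1-\lambda:=\delta/(y_k-y_\ell)\in(0,1)$; by \eqref{Tform}, $\y':=\y\bT$ equals $\y$ except that entry $k$ decreases by $\delta$ and entry $\ell$ increases by $\delta$. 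I would then verify three things: (a) $\y'$ is still decreasing — the only non-immediate sub-case, $\ell=k+1$, follows from $2\delta\le(y_k-x_k)+(x_\ell-y_\ell)\le y_k-y_\ell$; (b) $\x\prec\y'$ — the prefix sums for $p<k$ and $p\ge\ell$ coincide with or dominate those for $\y$, while for $k\le p<\ell$ the two possibilities for $\delta$ each reduce to a prefix-sum inequality for $\y$ combined with the equalities $x_i=y_i$, $k<i<\ell$; (c) the number of coordinates on which $\x$ agrees with its current companion strictly increases, since $\delta$ equals $y_k-x_k$ or $x_\ell-y_\ell$, creating a new agreement (at $k$ or $\ell$) while destroying none.

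Iterating (a)--(c) produces $\bT$-transforms $\bT_1,\bT_2,\dots$ with $\y\bT_1\cdots\bT_m=\x$ for some finite $m$. For the bound $m\le n-1$: the agreement count can never equal exactly $n-1$ while the companion differs from $\x$ (equal totals would force all $n$ coordinates to agree), and it starts at $\le n-2$ because $\x\neq\y$; a short induction on this count — each step raises it by at least one and the value $n-1$ is skipped — gives at most $n-1$ transforms. The main obstacle is step (a): a carelessly chosen transfer need not preserve the decreasing order, so the whole argument hinges on selecting the index pair as the innermost sign change of $x_i-y_i$, i.e. with $x_i=y_i$ strictly between $k$ and $\ell$; once that choice is made, (a)--(c) reduce to the elementary inequalities indicated above.
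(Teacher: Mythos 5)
Your proof is correct. Note that the paper does not prove Theorem \ref{th:equivalence_T} at all --- it is quoted from Marshall--Olkin--Arnold with a citation --- so there is no in-paper argument to compare against line by line. That said, your treatment of $(1)\Rightarrow(3)$ is exactly the classical transfer argument, and it runs structurally parallel to the proofs the paper \emph{does} give for its new results (Theorems \ref{th:l_transforms} and \ref{th:u_transforms}): locate a sign change of $x_i-y_i$, move mass $\delta=\min\{y_k-x_k,\,x_\ell-y_\ell\}$, check that majorization is preserved, and count steps via the number of newly equalized coordinates, with the $n-1$ bound coming from the observation that equal totals make the last step equalize two coordinates at once (the paper uses the same device in Corollaries \ref{lemma:n_minus_1} and \ref{cor:u_transforms}). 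Your one genuinely non-routine point --- that the $\bT$-transform must be applied at an \emph{innermost} sign change, with $x_i=y_i$ strictly between $k$ and $\ell$, so that the intermediate vector stays decreasing --- is correctly identified and correctly handled; this care is needed for $\bT$-transforms precisely because, unlike the paper's $\bA$-transform argument, you cannot simply re-sort the intermediate vector without affecting the matrix factorization. The short directions $(3)\Rightarrow(2)$ and $(2)\Rightarrow(1)$ are also fine, the latter being the standard $\sum_j t_jy_j\le\sum_{i\le k}y_i$ estimate for a decreasing $\y$.
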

%%%%%%%%%%%%%%%%%%%%%%%%%%%%%%%%%%%%%%%%%%%%%%%%%%%%%%%%%%%%%%%%

\section{Majorization by lower triangular stochastic matrices}\label{sec:L}
We start this section by introducing the concept of $\bA$-transform.

Informally, an $\bA$-transform of a vector $\x=(x_1, \ldots , x_n)$
is a transformation that involves two vector components, 
$x_i$ and $x_j$, with $i < j$. The transformation
operates on the vector $\x$ by increasing 
the value of the component $x_i$ {by} the quantity $\lambda x_j$ and decreasing the value of the component $x_j$ {by} the same value {$\lambda x_j$}, where $\lambda$ is
a real number 
$\lambda \in [0,1]$.
More formally,  an $\bA$-transform  can be described by the matrix:
\begin{equation}\label{I+X}
    \bA = \bI+ \bX,
\end{equation}
where $\bI$ is the $n\times n$ identity matrix and $\bX=[X_{\ell m}]$ is a matrix with all entries  equal to $0$
except for   two elements $X_{ji}$ and $X_{jj}$, for a given pair 
of indices $i,j$, $j>i$, where 
$X_{ji} = \lambda$ and $X_{jj} = -\lambda$. Thus, 
the vector $\x\bA$ has the form 
\begin{align}
    \x\bA = (x_1,\dots,x_{i-1},x_i + \lambda\,x_j,x_{i+1},\dots,x_{j-1}, 
    x_j - \lambda\,x_j,x_{j+1},\dots,x_n)\label{L1}.
\end{align}

Note that the matrix $\bA=[L_{\ell m}]$ is lower triangular and  row-stochastic, that is,

\begin{align}\label{eq:L}
    &A_{\ell m}\geq 0 \ \mbox{ for each } \ell,m,\\
    &A_{\ell m}=0 \ \mbox{ for } \ell {<} m, \\
    &\sum_{m=1}^n A_{\ell m}=1 \ \mbox{ for all } \ell.
\end{align}
The following theorem holds.

\begin{theorem}
        \label{th:l_transforms}
    Let $\x, \y \in \mathbb{R}_+^n$. 
    It holds that $\x \prec \y$ if, and only if, 
     $\y$ can be derived from $\x$ by the successive applications of a finite number of $\bA$-transforms.
\end{theorem}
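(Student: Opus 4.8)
The plan is to prove both implications by reducing to the known $\bT$-transform characterization in Theorem~\ref{th:equivalence_T}(3), since an $\bA$-transform and a $\bT$-transform are intimately related: comparing \eqref{Tform} with \eqref{L1}, one sees that applying an $\bA$-transform to $\x$ with parameter $\lambda$ on the pair $(i,j)$ produces the same effect on the pair of coordinates as a $\bT$-transform with some parameter $\lambda'$, provided that after the operation the coordinate that was $x_i$ is still at least as large as the coordinate that was $x_j$; more precisely, if $\mathbf{z} = \x\bA$ has $z_i = x_i + \lambda x_j \ge z_j = (1-\lambda)x_j$, then $\mathbf{z}$ is obtained from $\x$ by a $\bT$-transform. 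The subtlety is that $\bT$-transforms act on {\em unordered} index pairs and are reversible (being doubly stochastic), whereas an $\bA$-transform is directional — it moves mass from the later coordinate $x_j$ "up" to the earlier coordinate $x_i$ — and need not preserve the decreasing order. So the argument cannot be a verbatim translation; it must track orderings carefully.

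For the "only if" direction ($\x \prec \y \Rightarrow \y$ reachable from $\x$ by $\bA$-transforms), I would argue as follows. Since $\x \prec \y$, by Theorem~\ref{th:equivalence_T} there is a chain $\y = \mathbf{w}^{(0)}, \mathbf{w}^{(1)}, \dots, \mathbf{w}^{(r)} = \x$ where each $\mathbf{w}^{(t+1)}$ is obtained from $\mathbf{w}^{(t)}$ by a single $\bT$-transform, and one may take each $\mathbf{w}^{(t)}$ to be in decreasing order (this is the standard form of the robust $\bT$-transform chain, in which each step averages two {\em adjacent} decreasing coordinates, or can be arranged so). I reverse this chain: I must show that going from $\mathbf{w}^{(t+1)}$ {\em up} to $\mathbf{w}^{(t)}$ — i.e., from the "more equal" vector to the "less equal" one — is realized by an $\bA$-transform. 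If the $\bT$-transform from $\mathbf{w}^{(t)}$ to $\mathbf{w}^{(t+1)}$ acted on coordinates $i<j$ with $w^{(t)}_i \ge w^{(t)}_j$, replacing them by $(\lambda w^{(t)}_i + (1-\lambda)w^{(t)}_j,\ \lambda w^{(t)}_j + (1-\lambda)w^{(t)}_i)$, then the reverse step takes the pair $(a,b) := (w^{(t+1)}_i, w^{(t+1)}_j)$ with $a \ge b$ back to $(w^{(t)}_i, w^{(t)}_j)$, which increases the first coordinate and decreases the second by the same amount $\delta := w^{(t)}_i - a = b - w^{(t)}_j \ge 0$. To realize this as an $\bA$-transform I need $\delta = \mu b$ for some $\mu \in [0,1]$, i.e.\ $\mu = \delta/b$; since $0 \le \delta \le b$ (the second coordinate cannot go negative, as $w^{(t)}_j \ge 0$), such a $\mu \in [0,1]$ exists. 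Hence each reversed step is a valid $\bA$-transform, and composing them expresses $\y$ as the image of $\x$ under finitely many $\bA$-transforms. The one point needing care is the case $b = 0$: then $\delta = 0$ as well, so the step is trivial (the identity), and we simply omit it.

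For the "if" direction ($\y$ reachable from $\x$ by $\bA$-transforms $\Rightarrow \x \prec \y$), I would show that a single $\bA$-transform $\bA$ satisfies $\x \prec \x\bA$, and then invoke transitivity of majorization. There are two natural ways to see $\x \prec \x\bA$. One is direct from the definition \eqref{eq:x<y}: writing $\mathbf{z} = \x\bA$ with $z_i = x_i + \lambda x_j$, $z_j = x_j - \lambda x_j$, and $z_\ell = x_\ell$ otherwise, the partial sums satisfy $\sum_{\ell=1}^k z_\ell \ge \sum_{\ell=1}^k x_\ell$ for every $k$ (the difference is $0$ for $k < i$, then $\lambda x_j \ge 0$ for $i \le k < j$, then $0$ again for $k \ge j$), and $\sum_\ell z_\ell = \sum_\ell x_\ell$; this is precisely $\x \prec \mathbf{z}$ — {\em except} that \eqref{eq:x<y} as stated presupposes both vectors are sorted in decreasing order, and $\mathbf{z} = \x\bA$ need not be sorted. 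To handle this cleanly, I would instead use the order-free characterization: recall (standard, and implicit in Theorem~\ref{th:equivalence_T}) that $\x \prec \mathbf{z}$ iff $\sum \phi(x_\ell) \le \sum \phi(z_\ell)$ for all convex $\phi$, equivalently iff $\x = \mathbf{z}\bP$ for a doubly stochastic $\bP$ after reordering. The cleanest route: observe that $\x = \mathbf{z}\bP$ where $\bP$ is the doubly stochastic matrix that, restricted to coordinates $i,j$, is $\begin{pmatrix}\lambda' & 1-\lambda' \\ 1-\lambda' & \lambda'\end{pmatrix}$ with $\lambda'$ chosen so that $\lambda' z_i + (1-\lambda') z_j = x_i$ and $(1-\lambda') z_i + \lambda' z_j = x_j$ — solvable with $\lambda' \in [0,1]$ precisely when $x_i$ and $x_j$ lie between $z_i$ and $z_j$, which holds since $x_j \ge z_j$ and... here one must check $z_i \ge x_i \ge x_j \ge z_j$, using $\lambda \ge 0$ and $x_i \ge x_j$. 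So $\x$ arises from $\mathbf{z}$ by a doubly stochastic map, giving $\x \prec \mathbf{z} = \x\bA$ by Theorem~\ref{th:equivalence_T}(2). Iterating over the chain of $\bA$-transforms and using transitivity of $\prec$ then yields $\x \prec \y$, completing the proof.

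I expect the main obstacle to be the bookkeeping around sortedness: the definition \eqref{eq:x<y} is stated only for decreasingly-ordered vectors, while $\bA$-transforms do not respect that order, so every invocation of "$\prec$" must be justified either through the sorted rearrangement or, preferably, through the doubly-stochastic-matrix form (2), which is order-agnostic. Handling the boundary/degenerate cases ($\lambda = 0$, or a coordinate vanishing, or $x_i = x_j$) is routine but should be mentioned so the $\bA$-transform parameter stays in $[0,1]$.
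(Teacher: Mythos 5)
Your forward direction is a genuinely different route from the paper's: you reverse the standard $\bT$-transform chain of Theorem~\ref{th:equivalence_T}, whereas the paper gives a direct greedy construction (pick the smallest $j$ with $y_j>x_j$ and the smallest $k>j$ with $x_k>y_k$, move $\delta=\min\{y_j-x_j,\,x_k-y_k\}$ from coordinate $k$ to coordinate $j$, and check that the partial-sum inequalities of Definition~\ref{eq:x<y} are preserved while one more coordinate of $\y$ is matched). Your reversal is sound in outline, but it leans on structural properties of the $\bT$-chain that the bare statement of Theorem~\ref{th:equivalence_T} does not supply: you need every step to act on a pair $i<j$ whose $i$-th coordinate is at least its $j$-th, so that the reversed step moves mass from the later index to the earlier one. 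This is true of the Muirhead construction in Marshall--Olkin (where the intermediate vectors also stay nonincreasing), but you would have to cite or re-derive that construction, not just the theorem as quoted. Your parenthetical that the steps act on \emph{adjacent} coordinates is neither needed nor how the standard construction works. What your route buys is brevity (the partial-sum verification is outsourced to the known theorem); what the paper's route buys is self-containedness and the explicit bound of $n-1$ steps used later in Corollary~\ref{lemma:n_minus_1}.

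The converse direction, however, has a genuine gap. Your plan is to prove $\mathbf{v}\prec\mathbf{v}\bA$ for each single step and chain by transitivity, and your verification explicitly requires $v_i\ge v_j$ for the acted-upon pair. Intermediate vectors in an arbitrary chain of $\bA$-transforms need not remain sorted, and when $v_i<v_j$ the single-step claim is simply \emph{false}---no choice of doubly stochastic matrix can rescue it, because the majorization itself fails. Concretely, take $\x=(0.4,0.3,0.3)$, apply $\bA$ on the pair $(2,3)$ with $\lambda=1$ to get $\x_1=(0.4,0.6,0)$, then apply $\bA$ on the pair $(1,2)$ with $\lambda=1/6$ to get $\x_2=(0.5,0.5,0)$. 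Here $\x_2\prec\x_1$ strictly, so $\x_1\not\prec\x_2$ and your transitivity chain breaks at the second step, even though the theorem's conclusion $\x\prec\x_2$ holds. The repair is to abandon step-by-step majorization and argue globally, as the paper does: an $\bA$-transform only moves mass from a later coordinate to an earlier one, so every prefix sum is nondecreasing along the entire chain, and since the two \emph{endpoints} $\x$ and $\y$ are both sorted by the paper's standing convention, the prefix-sum comparison between them gives $\x\prec\y$ directly from the definition.
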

\begin{proof}
    
    Let $\x\prec \y$. To avoid trivialities, we assume $\x\neq \y$. 
    We shall prove that
    $\y$ can be derived from 
    $\x$ by the successive applications of a finite number 
    of $\bA$-transforms.
    
    Since the first condition  of (\ref{eq:x<y}) holds, 
    there is an index $j$ such that
    \begin{equation}\label{eq:j<k}
\sum_{i=1}^{j-1}x_i=\sum_{i=1}^{j-1}y_i \ \  \mbox{ and } \ \ 
\sum_{i=1}^{j}x_i<\sum_{i=1}^{j}y_i. 
  \end{equation}
For such an index $j$, it holds that $x_j<y_j$.
From (\ref{eq:j<k}) and the second condition of (\ref{eq:x<y}), 
we get that there exists an index $k>j$ 
such that $x_k>y_k.$
    
Let $j$ be the \textit{smallest} index such that $y_j > x_j$, and let $k$ be the \textit{smallest} index greater than $j$ such that $x_k > y_k$. 

Let 
\begin{equation}\label{delta}
\delta = \min{\{}y_j - x_j, x_k - y_k{\}}.
\end{equation}
We define an $\bA$-transform as in (\ref{L1}), with
$\lambda= {\delta}/{x_k}$ and $\bX=[X_{\ell m}]$ defined as follows:
 \begin{equation*}
        X_{\ell m} =\begin{cases}
            \lambda &\qquad\mbox{ if } \ell=k \mbox{ and }  m=j,\\
            -\lambda &\qquad\mbox{ if }  \ell=m=k,\\
            0 &\qquad\mbox{ otherwise. }
        \end{cases}
    \end{equation*}
The application of such a matrix $\bA$ on the vector $\x$
gives the vector $\x\bA=\x'$ with components
\begin{align}
    \x' = (x_1, \dots, x_{j-1},x_j + \delta, x_{j+1},\dots, x_{k-1},
   x_k - \delta,x_{k+1},\dots,x_n).\label{x'}
\end{align}

{
We pause here to illustrate the rest of 
our proof technique, which proceeds through the following steps:\\
(1) We compute the smallest index
$j$ for which (\ref{eq:j<k}) holds. This means that vectors 
$\x$ and $\y$ coincide on the first $j-1$ components.\\
(2) We modify vector $\x$ according to the $\bA$  operator defined above, 
to get vector $\x'$ as described in~(\ref{x'}).\\
(3) We prove (\ref{eq:partial_majorization}) below, without altering the order of the components of 
$\x'=\x\bA$ (this is crucial).\\
(4) The number of components on which $\x'$ and $\y$ now coincide is greater than 
the number of components on which $\x$ and $\y$  coincide.
}

Let us show that the new vector $\x'$ satisfies the following property:
\begin{equation}
    \label{eq:partial_majorization}
    \begin{cases}
            \displaystyle \sum_{i=1}^\ell x'_i \leq \sum_{i=1}^\ell y_i,  \ell=1,\ldots,n-1,\\ 
         \displaystyle\sum_{i=1}^n x'_i = \sum_{i=1}^n y_i.    
        \end{cases}
\end{equation}

From (\ref{x'}) and since the vector $\x$ satisfies the first condition of (\ref{eq:x<y}), we get
\begin{equation}
    \label{eq:first_part}
    \sum_{i=1}^\ell x'_i =\sum_{i=1}^\ell x_i\leq \sum_{i=1}^\ell y_i,\hspace{0.5cm} \ell=1,\dots,j-1.
\end{equation}

From (\ref{delta}), we know that $x_j+\delta\leq y_j$. Thus,
from (\ref{x'}) and (\ref{eq:first_part}), we get
\begin{equation}\label{secondopasso}
     \sum_{i=1}^j x'_i =\sum_{i=1}^{j-1} x_i
     +(x_j+\delta)
     \leq \sum_{i=1}^{j-1} y_i+y_j=\sum_{i=1}^{j} y_i.
\end{equation}
By definition, 
    the index $k$ is the \textit{smallest} index greater than  $j$ 
    for which $x_k > y_k$. It follows 
    that 
    \begin{equation}
        \label{eq:greater_values}
        x_{\ell}\leq y_{\ell},\hspace{1cm}\ell=j+1,\dots,k-1.
    \end{equation}
Therefore, from (\ref{x'}) and (\ref{secondopasso}), we
obtain that for each  $ \ell=j+1,\dots, k-1$, it holds that
\begin{align}
     \sum_{i=1}^{\ell} x'_i &=
     \sum_{i=1}^{j-1} x_i  +(x_j+\delta)
                    +\sum_{i=j+1}^\ell x_j\nonumber\\
        & \leq\sum_{i=1}^{j-1} y_i+
        y_j+\sum_{i=j+1}^{\ell} x_i\nonumber\mbox{(since $x_j+\delta\leq y_j$) }\\ 
        & \leq\sum_{i=1}^{j-1} y_i+
        y_j+\sum_{i=j+1}^{\ell} y_i\hspace{0.5cm}\mbox{(from (\ref{eq:greater_values}))}\nonumber\\
        & = \sum_{i=1}^{\ell} y_i.\label{terzopasso}
\end{align}
From (\ref{x'}) and since the vector $\x$ satisfies the first condition of (\ref{eq:x<y}), we get
\begin{align}
    \sum_{i=1}^k x'_i& = \sum_{i=1, i\neq j,k}^k x_i + 
    (x_j+\delta)+(x_k-\delta)\nonumber\\
    &= \sum_{i=1}^k x_i 
    \leq \sum_{i=1}^k y_i.\label{quartopasso}
\end{align}

Finally, since the vector $\x$ satisfies the first and second condition of (\ref{eq:x<y}), we have that
\begin{equation}
    \label{quintopasso}
    \begin{cases}
            \displaystyle \sum_{i=1}^\ell x'_i = \sum_{i=1}^{\ell} x_i \leq \sum_{i=1}^\ell y_i, \hspace{0.5cm} \ell=k+1,\ldots,n-1,\\ 
         \displaystyle\sum_{i=1}^n x'_i = \sum_{i=1}^{n} x_i = \sum_{i=1}^n y_i.    
        \end{cases}
\end{equation}
Therefore, from (\ref{eq:first_part}), (\ref{secondopasso}), and (\ref{terzopasso})--(\ref{quintopasso}), we have that (\ref{eq:partial_majorization}) holds.

Notice that if $\delta = y_j - x_j$, then $x'_j$ is equal to $y_j$; equivalently,  if  $\delta = x_k - y_k$, then $x'_k$ will be equal to $y_k$. Thus, the vector $\x'=\x\bA$ has at least one additional component (with respect to $\x$) that is equal to a component of $\y$.
Moreover, 
since each $\bA$-transform preserves the property (\ref{eq:partial_majorization}), we can iterate the process
starting from $\x'=\x\bA$.
It follows that  $\y$ can be derived from $\x$ by the application of a finite number of $\bA$-transforms.

\smallskip
Let us prove the converse part of the theorem.
Hence, we assume that  $\y$ can be derived from $\x$ by a finite number of $\bA$-transforms, and we prove
 that $\x \prec \y$.
Let 
\begin{equation}\label{adj}
\x_1=\x\bA_1, \ \x_2=\x_1\bA_2,\dots, \x_k=\x_{k-1}\bA_k = \y
\end{equation}
be the vectors obtained by the successive applications of a number $k$ of $\bA$-transforms.   
Given an arbitrary vector $\z\in \mathbb{R}_+^n$, let us denote  with $\z^{\downarrow}$ the vector with the same
components as $\z$, ordered in a nonincreasing fashion. 
From the definition (\ref{x'}) of  $\bA$-transform,
it follows that the partial sums of $\x'$ are certainly greater than {or equal to} the
corresponding partial sums of $\x$. Therefore, 
$$\x \prec \x_1^{\downarrow} \prec \x_2^{\downarrow} \prec \dots \prec \x_k^{\downarrow}=\y.\footnote{Recall
that $\x=\x^{\downarrow}$ and $\y=\y^{\downarrow}$.}$$
By the transitivity property of the partial order relation $\prec$, we get  $\x \prec \y$. 
\end{proof}

\begin{corollary}
    \label{lemma:n_minus_1}
    Let $\x, \y \in \mathbb{R}_+^n$. If $\x \prec \y$, then $\y$ can be derived from $\x$ by the successive application of, at most, $n-1$ $\bA$-transforms.
\end{corollary}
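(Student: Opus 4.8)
The plan is to re-run the explicit construction from the proof of Theorem~\ref{th:l_transforms} while keeping a careful count of the $\bA$-transforms it uses. First I would fix notation: for a vector $\z\in\mathbb{R}_+^n$ with $\sum_{i=1}^n z_i=\sum_{i=1}^n y_i$, let $m(\z)$ denote the number of positions $i$ at which $z_i=y_i$. Then I would record two elementary facts. (i) If $m(\z)\geq n-1$ then $\z=\y$, since agreement with $\y$ in $n-1$ positions together with equality of total sums forces agreement in the last position as well; in particular $m(\x)\leq n-2$ whenever $\x\neq\y$ (and if $\x=\y$ the statement is trivial, as no transform is needed). (ii) Every $\bA$-transform produced by the construction increases $m$ by at least one.

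For (ii) I would recall that, at a generic stage with current vector $\z\neq\y$, the construction picks the smallest index $j$ with $y_j>z_j$ and the smallest index $k>j$ with $z_k>y_k$ (these exist by~(\ref{eq:partial_majorization}), as in the proof of Theorem~\ref{th:l_transforms}), and applies an $\bA$-transform changing only positions $j$ and $k$, namely $z_j\mapsto z_j+\delta$ and $z_k\mapsto z_k-\delta$ with $\delta=\min\{y_j-z_j,\,z_k-y_k\}$. Both $j$ and $k$ are positions where $\z$ currently disagrees with $\y$, and after the transform at least one of them agrees with $\y$ (position $j$ if $\delta=y_j-z_j$, position $k$ if $\delta=z_k-y_k$), while every other coordinate is left unchanged; hence $m$ strictly increases. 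Since the construction also preserves property~(\ref{eq:partial_majorization}), each vector it produces keeps the same coordinate sum as $\y$, so fact (i) remains applicable at every stage.

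Finally I would put the pieces together: assuming $\x\neq\y$, the construction begins at $m(\x)\leq n-2$ and gains at least one in $m$ per transform, so after at most $(n-1)-m(\x)\leq n-1$ transforms the current vector has $m\geq n-1$ and therefore equals $\y$ by fact (i) --- giving the bound of $n-1$ $\bA$-transforms. The subtlety I expect to have to be careful about is that the intermediate vectors generated along the way need not be ordered nonincreasingly, so ``agreeing with a coordinate of $\y$'' must be interpreted positionally; the argument is arranged so that both the monotonicity of $m$ and the stopping rule ``$m\geq n-1\Rightarrow$ current vector $=\y$'' use only the invariance of the coordinate sum guaranteed by~(\ref{eq:partial_majorization}), never the nonincreasing order, so that no re-sorting step --- which could in principle destroy an already-achieved match --- is ever invoked.
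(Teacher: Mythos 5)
Your proof is correct and takes essentially the same approach as the paper's: each $\bA$-transform of the construction equalizes at least one new coordinate to the corresponding coordinate of $\y$, and the common-sum constraint makes the last coordinate come for free, yielding the bound $n-1$. Your explicit counter $m(\z)$ together with the observation that $m(\z)\geq n-1$ forces $\z=\y$ is just a slightly more formal rendering of the paper's remark that the final $\bA$-transform equalizes both affected components at once.
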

\begin{proof}
    In the proof of Theorem \ref{th:l_transforms}, we have shown 
    that the application of each $\bA$-transform 
    equalizes \textit{at least} one component 
    of the intermediate vectors $\x_j=\x_{j-1}\bA$ 
    to a component of $\y$. Since all vectors appearing in (\ref{adj}) have an equal sum, the last $\bA$-transform always equalizes both the affected components to the respective components of $\y$.  As a result, $\y$ can be obtained  by the application of
     at most $n-1$ \ $\bA$-transforms.
\end{proof}

Although it is evident, let us explicitly mention that if $\y$ can be derived by the application of
     at most $n-1$ \ $\bA$-transforms, then 
     it holds that $\y=\x\bL$, where the matrix
     $\bL$ is the product of the individual $\bA$-transforms.
     
The following technical lemma is probably already known in the literature. 
Since we have not found a  source that explicitly
mention{s} it, we provide a proof of it to maintain the paper self-contained.

\begin{lemma}
    \label{lemma:matrix_mul}
    Let $\bC$ and $\bD$ be two $n\times n$ lower triangular row-stochastic matrices. The product matrix $\bC\bD$ is still a lower triangular row-stochastic matrix.
\end{lemma}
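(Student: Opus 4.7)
The plan is to verify the three defining properties of a lower triangular row-stochastic matrix for the product $\bC\bD$, using the entrywise formula $(\bC\bD)_{\ell m} = \sum_{k=1}^{n} C_{\ell k} D_{k m}$.

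First I would dispatch nonnegativity: each summand $C_{\ell k} D_{k m}$ is a product of nonnegative numbers, so $(\bC\bD)_{\ell m} \geq 0$ for all $\ell, m$. Next, for the row-stochastic property, I would exchange the order of summation,
\begin{equation*}
\sum_{m=1}^{n} (\bC\bD)_{\ell m} = \sum_{m=1}^{n}\sum_{k=1}^{n} C_{\ell k} D_{k m} = \sum_{k=1}^{n} C_{\ell k} \Bigl(\sum_{m=1}^{n} D_{k m}\Bigr) = \sum_{k=1}^{n} C_{\ell k} = 1,
\end{equation*}
using that each row of $\bD$ and each row of $\bC$ sums to $1$.

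The step that requires a small argument (and is the only place where the \emph{lower triangular} hypothesis is used) is showing that $(\bC\bD)_{\ell m} = 0$ whenever $\ell < m$. For this I would fix $\ell < m$ and look at the summands $C_{\ell k} D_{k m}$. The factor $C_{\ell k}$ forces $k \leq \ell$ (otherwise $C_{\ell k}=0$ by lower-triangularity of $\bC$), while the factor $D_{k m}$ forces $m \leq k$ (otherwise $D_{k m}=0$ by lower-triangularity of $\bD$). Combining these gives $m \leq k \leq \ell$, which contradicts $\ell < m$; hence every summand vanishes and $(\bC\bD)_{\ell m} = 0$.

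I do not anticipate a real obstacle: the argument is a direct application of the definitions and a standard Fubini-style swap of summation. The only subtlety is the index-chasing in the lower-triangular step, which I would state carefully to make the implication $\ell < m \Rightarrow (\bC\bD)_{\ell m} = 0$ transparent.
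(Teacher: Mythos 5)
Your proof is correct and follows essentially the same route as the paper's: nonnegativity of the entries, a Fubini-style swap of summation for the row sums, and lower-triangularity of the product. The only difference is that you spell out the index-chasing for the triangularity step, which the paper simply asserts as evident.
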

\begin{proof}
    Since $\bC\bD$ is the product of two lower triangular matrices, 
    one can see that $\bC\bD$ is a lower triangular matrix too. Thus, we need only to show that it is row-stochastic. 

    First of all, each entry  $(CD)_{ij}$ of $\bC\bD$ is nonnegative
    since it is 
    the sum of nonnegative values. Let us consider 
    the sum of the elements of the $i$-th row:
    \begin{align*}
        \sum_{j=1}^n (CD)_{ij} =& \sum_{j=1}^n \left(\sum_{k=1}^n C_{ik}D_{kj}\right)\\
        =& \sum_{k=1}^n C_{ik} \left(\sum_{j=1}^n D_{kj}\right)\\
        =& \sum_{k=1}^n C_{ik}\cdot 1 \hspace{1cm} \mbox{(since $\sum_{j=1}^n D_{kj} = 1$)}\\
        =& 1 \hspace{2.7cm} \mbox{(since $\sum_{k=1}^n C_{ik} = 1$)}.
    \end{align*}
    Thus, since the above reasoning holds for each $i=1,\dots,n$, the matrix $\bC\bD$ is a lower triangular row-stochastic matrix.
\end{proof}
{It may be worth commenting that Lemma \ref{lemma:matrix_mul} also holds for a product of column-stochastic matrices, which gives a column-stochastic matrix. This holds since $(\bC\bD)^T=\bD^T\bC^T$, where, for an arbitrary matrix $\bD$, $\bD^T$ denotes the transpose of $\bD$, and the transpose of a row-stochastic matrix is a column-stochastic matrix.}

{The} next Theorem characterizes majorization in terms of lower triangular row-stochastic matrices.

\begin{theorem}
    \label{th:lower_triangular}
    Let $\x,\y \in \mathbb{R}_+^n$. It holds that $\x \prec \y$ if, and only if, there exists a lower triangular row-stochastic matrix $\bL$ such that $\y = \x\bL$.
\end{theorem}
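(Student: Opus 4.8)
The plan is to prove the two implications separately, leaning on the machinery already built in this section.

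For the \emph{forward} direction, suppose $\x \prec \y$. If $\x = \y$ we simply take $\bL = \bI$, which is trivially lower triangular and row-stochastic. Otherwise, by Corollary~\ref{lemma:n_minus_1} there are $\bA$-transforms $\bA_1, \dots, \bA_m$ with $m \le n-1$ such that $\y = \x\bA_1 \bA_2 \cdots \bA_m$; putting $\bL = \bA_1 \bA_2 \cdots \bA_m$ gives $\y = \x\bL$. Each factor $\bA_t$ is, by the construction \eqref{I+X} and the remark \eqref{eq:L}, a lower triangular row-stochastic matrix, and an easy induction on $m$ using Lemma~\ref{lemma:matrix_mul} shows that the product $\bL$ is again lower triangular and row-stochastic. (In essence this direction is already recorded in the remark following Corollary~\ref{lemma:n_minus_1}.)

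For the \emph{converse}, suppose $\y = \x\bL$ for some lower triangular row-stochastic matrix $\bL = [L_{ij}]$. I would check the two conditions of \eqref{eq:x<y} by a direct partial-sum computation: fixing $k \in \{1, \dots, n\}$ and interchanging the order of summation,
\begin{equation*}
\sum_{j=1}^{k} y_j = \sum_{j=1}^{k} \sum_{i=1}^{n} x_i L_{ij} = \sum_{i=1}^{n} x_i \sum_{j=1}^{k} L_{ij}.
\end{equation*}
The crucial observation is that the inner sum equals $1$ whenever $i \le k$: since $\bL$ is lower triangular, $L_{ij} = 0$ for $j > i$, so $\sum_{j=1}^{k} L_{ij} = \sum_{j=1}^{i} L_{ij} = 1$ by row-stochasticity; whereas for $i > k$ one only has $0 \le \sum_{j=1}^{k} L_{ij} \le 1$. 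Hence
\begin{equation*}
\sum_{j=1}^{k} y_j = \sum_{i=1}^{k} x_i + \sum_{i=k+1}^{n} x_i \Bigl( \sum_{j=1}^{k} L_{ij} \Bigr) \ge \sum_{i=1}^{k} x_i,
\end{equation*}
because $\x \in \mathbb{R}_+^n$ makes every term of the second sum nonnegative. For $k = n$ the second sum is empty, so the inequality becomes the equality $\sum_{i=1}^{n} y_i = \sum_{i=1}^{n} x_i$. These are precisely the relations defining $\x \prec \y$ in \eqref{eq:x<y}.

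I do not expect a genuine obstacle here: the proof is short, and the only step needing a little care is the index bookkeeping in the double sum — specifically, recognizing that lower-triangularity together with row-stochasticity forces $\sum_{j=1}^{k} L_{ij} = 1$ for every $i \le k$ (not merely $\le 1$), which is what yields the clean decomposition into $\sum_{i=1}^{k} x_i$ plus a nonnegative remainder. As a byproduct, this converse computation also gives an alternative, purely matrix-algebraic proof of the ``if'' part of Theorem~\ref{th:l_transforms}, since any finite product of $\bA$-transforms is in particular a lower triangular row-stochastic matrix.
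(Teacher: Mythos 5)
Your proposal is correct and follows essentially the same route as the paper: the forward direction via the $\bA$-transform decomposition (Theorem~\ref{th:l_transforms} / Corollary~\ref{lemma:n_minus_1}) combined with Lemma~\ref{lemma:matrix_mul}, and the converse via the same partial-sum interchange, using that lower-triangularity plus row-stochasticity forces $\sum_{j=1}^{k} L_{ij}=1$ for $i\le k$ and leaves a nonnegative remainder for $i>k$. No gaps.
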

\begin{proof}
    The implication that if $\x \prec \y$, then there exists a lower triangular row-stochastic matrix $\bL$ such that $\y = \x\bL$ directly {follows from} the results of Theorem \ref{th:l_transforms} and Lemma \ref{lemma:matrix_mul}.
    Indeed, from Theorem~\ref{th:l_transforms} we know that $\y$ can be obtained from $\x$ by the successive application of $\bA$-transforms, and from Lemma \ref{lemma:matrix_mul} we know that the product of two consecutive $\bA$-transforms is still a lower triangular row-stochastic matrix. Hence, the matrix $\bL$ obtained as the product of all the $\bA$-transforms is a lower triangular row-stochastic matrix for which $\y = \x\bL$.

{
    We notice that a different proof of the 
    above result has been given by Li in \cite[Lemma 1]{Li} by an application of 
    Algorithm 1 in \cite{CGV}
    }

    \smallskip
    
    Let us now prove the converse implication.
    Assume that there exists a lower triangular row-stochastic matrix $\bL=[L_{ij}]$ for which $\y = \x \bL$. Let us prove that $\x \prec \y$. For such a purpose, we can express each component of $\y$ in the following way:
    \begin{equation}
    \label{eq:y}
    y_i = \sum_{j=i}^n L_{ji} x_j,\hspace{0.5cm} i=1,\dots,n.
    \end{equation}

Hence, by using (\ref{eq:y}), we can rewrite the sum of the first $k$ components of $\y$ as follows:
$$
   y_1 + \dots + y_k =  \sum_{j=1}^n L_{j1}x_j + \sum_{j=2}^n L_{j2}x_j + \dots + \sum_{j=k}^n L_{jk}x_j.
$$
By grouping the multiplicative coefficients of each $x_i$, we get:
\begin{equation}
    \label{eq:partial_sum}
   y_1 + \dots + y_k =  x_1\sum_{i=1}^1 L_{1i} + x_2\sum_{i=1}^2 L_{2i} + \dots + x_k\sum_{i=1}^k L_{ki} + \dots + x_n \sum_{i=1}^k L_{ni}.
\end{equation}

Since the matrix $\bL$ is lower triangular row-stochastic, we have that for each $j=1,\dots,n$, it holds that $\sum_{i=1}^j L_{ji} = 1$. Hence, from (\ref{eq:partial_sum}), we get
$$   y_1 + \dots + y_k \geq  x_1\,\sum_{i=1}^1 L_{1i} + x_2\,\sum_{i=1}^2 L_{2i} + \dots + x_k\,\sum_{i=1}^k L_{ki} = x_1 + \dots + x_k$$
for each $k=1,\dots,n${, and $$
y_1 + \dots + y_n=x_1 + \dots+x_n.
$$} 
Thus, $\x \prec \y$.
\end{proof}

We point out the following interesting property of
the matrix $\bL$ that appears in the
``if" part of Theorem \ref{th:lower_triangular}.
\begin{corollary}
Let $\x, \y \in \mathbb{R}_+^n$. If \ $\x \prec \y$, then 
there exists a lower triangular row-stochastic matrix  $\bL$,
with at most $2n-1$ nonzero elements, such that  $\y=\x\bL$.
\end{corollary}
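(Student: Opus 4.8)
The plan is to reuse the explicit matrix already produced by Corollary~\ref{lemma:n_minus_1} together with Lemma~\ref{lemma:matrix_mul}: there exist $\bA$-transforms $\bA_1,\dots,\bA_m$ with $m\le n-1$ such that $\bL:=\bA_1\cdots\bA_m$ is lower triangular row-stochastic and $\y=\x\bL$. So the only thing left to prove is that \emph{this} $\bL$ has at most $2n-1$ nonzero entries. I would do this by induction on $t$, showing that each multiplication by $\bA_t$ creates at most one new nonzero entry in the partial product $\bL^{(t)}:=\bA_1\cdots\bA_t$; since $\bL^{(0)}=\bI$ has exactly $n$ nonzeros and $m\le n-1$, this gives at most $n+(n-1)=2n-1$ nonzeros in $\bL=\bL^{(m)}$.

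The structural input, read off from the proof of Theorem~\ref{th:l_transforms}, is the following. Put $S=\{i:x_i<y_i\}$ and $T=\{i:x_i>y_i\}$ (with respect to the \emph{original} vectors); these are disjoint. In the algorithm a coordinate that starts below its target only increases, a coordinate that starts above only decreases, and a coordinate already equal to its target is never touched; consequently the ``sink'' $j_t$ of the $t$-th $\bA$-transform always lies in $S$, its ``source'' $k_t$ always lies in $T$, and $j_t<k_t$. I would then prove, by induction on $t$, the invariants: (i) for every $i\notin T$, row $i$ of $\bL^{(t)}$ equals $e_i$; and (ii) for every $i\in T$, the support of row $i$ of $\bL^{(t)}$ is contained in $S\cup\{i\}$. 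Given (i)--(ii), the $k_t$-th column of $\bL^{(t-1)}$ is supported on its diagonal entry alone: a nonzero in that column forces, via (i), a row index in $T$, and then via (ii) together with $k_t\notin S$ that index must be $k_t$ itself. Since $\bA_t=\bI+\bX_t$ with $\bX_t$ having nonzero entries only in row $k_t$ (namely $\lambda_t$ in column $j_t$ and $-\lambda_t$ in column $k_t$), passing from $\bL^{(t-1)}$ to $\bL^{(t)}=\bL^{(t-1)}+\bL^{(t-1)}\bX_t$ changes only columns $j_t$ and $k_t$: column $k_t$ merely gets scaled by $1-\lambda_t$, so its support cannot grow, while column $j_t$ acquires at most the single new entry in row $k_t$. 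That is the one-new-nonzero-per-step claim, and the same computation shows the invariants (i)--(ii) persist.

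I expect the main obstacle to be the bookkeeping in the inductive step: checking that right-multiplying by $\bA_t$ can only enlarge row $k_t$ of the product (since row $k_t$ is the only nonzero row of $\bX_t$) and does so only within columns $j_t\in S$ and $k_t$, so that (i) and (ii) are preserved, and making sure the disjointness $S\cap T=\emptyset$ is invoked exactly where column $k_t$ is claimed to be diagonal. Everything else is routine matrix arithmetic.

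As an alternative that bypasses the algorithm: the set of lower-triangular row-stochastic matrices $\bL$ with $\y=\x\bL$ is a nonempty (by Theorem~\ref{th:lower_triangular}) compact polytope, hence has a vertex $\bL^{*}$; at a vertex the columns of the equality-constraint matrix indexed by the support of $\bL^{*}$ are linearly independent, so $\bL^{*}$ has at most as many nonzeros as the rank of that matrix; and the $2n$ equality constraints --- the $n$ row-sum equations $\sum_{i\le j}L_{ji}=1$ and the $n$ equations $y_i=\sum_{j\ge i}L_{ji}x_j$ --- satisfy the single linear dependence $\sum_i(\text{$i$-th equation})=\sum_j x_j\,(\text{$j$-th row-sum equation})$, valid because $\sum_i y_i=\sum_i x_i$, so the rank is at most $2n-1$.
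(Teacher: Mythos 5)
Your main argument is essentially the paper's own proof: decompose $\bL$ into at most $n-1$ $\bA$-transforms via Corollary~\ref{lemma:n_minus_1}, and show each right-multiplication by $\bA_t$ creates at most one new nonzero entry because column $k_t$ of the partial product is supported on its diagonal entry alone. You reach that column-support fact through the invariants on the sets $S$ and $T$, whereas the paper gets it from the monotonicity of the chosen source indices (``all rows after the $k$-th row of $\bC$ are identity rows''); your bookkeeping is a bit more explicit about why the invariants persist, but the mechanism and the count $n+(n-1)=2n-1$ are the same. Your closing alternative, however, is a genuinely different and correct route: the feasible set $\{\bL \text{ lower triangular, row-stochastic}, \y=\x\bL\}$ is a nonempty compact polytope, a vertex is a basic feasible solution whose support size is bounded by the rank of the $2n$ equality constraints, and the dependence $\sum_i(\text{$i$-th $y$-equation})=\sum_j x_j\,(\text{$j$-th row-sum equation})$ (valid since $\sum_i y_i=\sum_j x_j$) caps that rank at $2n-1$. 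This buys robustness --- it needs no algorithm and would transfer verbatim to the upper-triangular case and to other linearly constrained families --- at the cost of losing the paper's constructive description of which $2n-1$ entries are nonzero.
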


\begin{proof}
    From Corollary \ref{lemma:n_minus_1} and Theorem \ref{th:lower_triangular}, we know that the matrix $\bL$, 
    for which it holds that $\y=\x\bL$, is  the product of at most $n-1$ $\bA$-transforms.
    Let 
    \begin{equation}\label{eq:sequence_A}
        \bA_1,\dots,\bA_t
    \end{equation}
    be the individual  matrices associated with such  $t$ $\bA$-transforms, $t<n$.
By Definition (\ref{I+X}), the matrix $\bA_1$ has $n+1$ nonzero elements.
    
Let $\bC$ be the matrix equal to the product of the first $m-1$ $\bA$-transforms of (\ref{eq:sequence_A}), $m\leq t <n$, that is, $\bC=\prod_{i=1}^{m-1}\bA_i$.
We show that the product
    \begin{equation}\label{eq:product}
        \bC \bA_{m}  
    \end{equation}
gives a matrix with only one additional nonzero element with respect to $\bC$.

    Let $j, k$ be the pair of {indices} chosen to construct $\bA_{m} = \bI + \bX_{m}$. From (\ref{eq:product}), we get
    \begin{equation}\label{eq:product_ext}
        \bC \bA_{m} = \bC (\bI +\bX_{m}) = \bC + \bC  \bX_{m}.
    \end{equation}
    
    For every $\bA$-transform, we recall that 
    we always choose the \textit{smallest} index $j$ such that $y_j > x_j$, and the \textit{smallest} index $k$ greater than $j$ for which $y_k < x_k$. 
    Therefore, all the previous $\bA$-transforms have chosen
    {indices} less than or at most equal to $k$. Consequently, in the matrix $\bC$, all the rows after the $k$-th row are equal to the rows of the identity matrix. 
    By the definition, the matrix $\bX_{m}$ has nonzero elements only in the $k$-th row, 
    (in positions $X_{kj}$ and $X_{kk}$, respectively).
    Hence, the matrix $\bC \bX_{m}$ 
    has nonzero elements only in the entries  $(C  X_{m})_{kj}$ and $(C X_{m})_{kk}$. Since the element $(C X_{m})_{kk}$ will be added to the diagonal element of $\bC$, the \textit{only new} nonzero  element in $\bC \bX_{m}$ (with respect to $\bC$) is $(CX_{m})_{kj}$. 
    Hence, from (\ref{eq:product_ext}), we get that the product $\bC\bA_m$ gives a matrix with only one new element with respect to $\bC$. 
    
    Since each product generates a matrix with only one additional nonzero element with respect to the previous one, 
    we obtain that the final matrix $\bL$ has at most $n+1 + n - 2 = 2n -1$ nonzero elements.
\end{proof}

We  summarize our results in the next theorem, mirroring
the classic Theorem \ref{th:equivalence_T}.
\begin{theorem}
    If $\x, \y \in \mathbb{R}_+^n$, the following conditions are equivalent:
    \begin{enumerate}
        \item[(1)] $\x \prec \y$;
        \item[(2)] $\y = \x\bL$ for some lower triangular row-stochastic matrix $\bL$;
        \item[(3)] $\y$ can be derived from $\x$ by successive applications of at most $n-1$
 $\bA$-transforms,  as defined  in~(\ref{L1}).
    \end{enumerate}
\end{theorem}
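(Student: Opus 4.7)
The plan is to close the cycle $(1)\Rightarrow(3)\Rightarrow(2)\Rightarrow(1)$ by assembling the results already established in this section; no new machinery should be needed.

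First I would handle $(1)\Rightarrow(3)$ by directly invoking Theorem \ref{th:l_transforms} together with Corollary \ref{lemma:n_minus_1}. Theorem \ref{th:l_transforms} yields that $\x\prec\y$ implies that $\y$ can be derived from $\x$ by finitely many $\bA$-transforms, and Corollary \ref{lemma:n_minus_1} sharpens the count to at most $n-1$, which is exactly what (3) asserts.

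Next, for $(3)\Rightarrow(2)$, suppose $\y$ is obtained from $\x$ by successively applying $\bA$-transforms $\bA_1,\dots,\bA_t$ with $t\le n-1$. Each $\bA_i$ has the form $\bI+\bX_i$, which by inspection of (\ref{I+X}) and the row-sum conditions (\ref{eq:L}) is itself a lower triangular row-stochastic matrix. Setting $\bL=\bA_1\bA_2\cdots\bA_t$ and applying Lemma \ref{lemma:matrix_mul} inductively on $t$, the product $\bL$ remains lower triangular and row-stochastic, and clearly $\y=\x\bL$.

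Finally, for $(2)\Rightarrow(1)$, I would simply cite the converse implication already proved in Theorem \ref{th:lower_triangular}: existence of a lower triangular row-stochastic $\bL$ with $\y=\x\bL$ implies $\x\prec\y$ via the partial-sum computation carried out there. Since every implication reduces to a previously established result, there is no substantive obstacle; the only point requiring a little care is checking that an individual $\bA$-transform is itself lower triangular row-stochastic (so that Lemma \ref{lemma:matrix_mul} can be applied to produce the matrix $\bL$ required in (2)), and this is immediate from the definition (\ref{I+X}).
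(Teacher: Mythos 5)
Your proposal is correct and follows essentially the same route as the paper, which likewise derives the equivalences by assembling Theorem \ref{th:l_transforms} and Theorem \ref{th:lower_triangular}. If anything, your version is slightly more careful than the paper's one-line proof, since you explicitly invoke Corollary \ref{lemma:n_minus_1} to obtain the ``at most $n-1$'' bound in (3) and Lemma \ref{lemma:matrix_mul} to form the product matrix $\bL$ in (2), both of which are genuinely needed.
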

\begin{proof}
    The equivalences {follow} from the results of Theorems \ref{th:l_transforms} and \ref{th:lower_triangular}.
\end{proof}

Let us look at an example of how the matrix $\bL$ is constructed.

\begin{example}
Let $\y = [0.6, 0.2, 0.1, 0.1]$ and $\x = [0.3, 0.3, 0.3, 0.1]$ with $\x \prec \y$.

The first $\bA$-transform affects the elements $x_1$ and $x_2$ with $\delta = \min{\{}y_1-x_1,x_2-y_2{\}} = 0.1$ and $\lambda={\delta}/{x_2} = {1}/{3}$. As a result, the associated  matrix $\bA_1$ is:
$$
    \bA_1 = \begin{bmatrix}
            1 & 0 & 0 & 0\\
            0 & 1 & 0 & 0\\
            0 & 0 & 1 & 0\\
            0 & 0 & 0 & 1
        \end{bmatrix} +
        \begin{bmatrix}
            1 & 0 & 0 & 0\\
            \frac{1}{3} & -\frac{1}{3} & 0 & 0\\
            0 & 0 & 1 & 0\\
            0 & 0 & 0 & 1
        \end{bmatrix} =
        \begin{bmatrix}
            1 & 0 & 0 & 0\\
            \frac{1}{3} & \frac{2}{3} & 0 & 0\\
            0 & 0 & 1 & 0\\
            0 & 0 & 0 & 1
        \end{bmatrix}
$$
and $\x^1 {= \x \bA_1} = [0.4, 0.2, 0.3, 0.1]$.

The second $\bA$-transform affects the elements $x^1_1$ and $x^1_3$ with $\delta= \min{\{}y_1-x^1_1, x^1_3-y_3{\}}= 0.2$ and $\lambda={\delta}/{x^1_3}={2}/{3}$. Hence,
the matrix $\bA_2$ is 
$$
    \bA_2 = \begin{bmatrix}
            1 & 0 & 0 & 0\\
            0 & 1 & 0 & 0\\
            0 & 0 & 1 & 0\\
            0 & 0 & 0 & 1
        \end{bmatrix} +
        \begin{bmatrix}
            1 & 0 & 0 & 0\\
            0 & 1 & 0 & 0\\
            \frac{2}{3} & 0 & -\frac{2}{3} & 0\\
            0 & 0 & 0 & 1
        \end{bmatrix} =
        \begin{bmatrix}
            1 & 0 & 0 & 0\\
            0 & 1 & 0 & 0\\
            \frac{2}{3} & 0 & \frac{1}{3} & 0\\
            0 & 0 & 0 & 1
        \end{bmatrix}
$$
and $\x^2 {= \x^1 \bA_2}= [0.6, 0.2, 0.1, 0.1] = \y$. Therefore, the final matrix for which $\y=\x\bL$ is
\begin{equation}
    \label{L_matrix_ex}
    \begin{aligned}
        \bL = \bA_1 \bA_2 =
        \begin{bmatrix}
            1 & 0 & 0 & 0\\
            \frac{1}{3} & \frac{2}{3} & 0 & 0\\
            \frac{2}{3} & 0 & \frac{1}{3} & 0\\
            0 & 0 & 0 & 1
        \end{bmatrix}.
    \end{aligned}
\end{equation}
It is worth noticing that the matrix $\bL$ is not the inverse of the {doubly stochastic matrix} $\bP$, for which $\x=\y\bP$, obtained by applying a series of  $\bT$-transforms (Theorem \ref{th:equivalence_T}). In fact, the inverse of $\bL$ is 
$$
\begin{bmatrix}
            1 & 0 & 0 & 0\\
            -\frac{1}{2} & \frac{3}{2} & 0 & 0\\
            -2 & 0 & 3 & 0\\
            0 & 0 & 0 & 1
        \end{bmatrix}
$$
which is not a doubly stochastic matrix.
\end{example}

%%%%%%%%%%%%%%%%%%%%%%%%%%%%%%%%%%%%%%%%%%%%%%%%%%%%%%%%%%%%%%%%%%%%%%%%%

\section{Majorization by upper triangular stochastic matrices}\label{sec:upper}

We now present
our characterization of majorization through upper triangular row-stochastic matrices.

A $\bB$-\textit{transformation} or, more briefly, a $\bB$-\textit{transform}, is a transformation of a vector $\y =(y_1,\dots,y_n)$ that involves two vector components, $y_i$ and $y_j$, with $i<j$. The transformation operates on the vector $\y$ by decreasing the component $y_i$ {by} the quantity $\lambda y_i$ and increasing the component $y_j$ {by} the same value {$\lambda y_i$}, where $\lambda$ is a real number such that $\lambda \in [0,1]$. 

We can describe a $\bB$-transform as a matrix
\begin{equation}\label{defB}
    \bB = \bI + \bY,
\end{equation}
where $\bI$ is the identity matrix and $\bY=[Y_{\ell m}]$ is a matrix with all entries equal to 0 except for two elements $Y_{ii}$ and $Y_{ij}$, where  $Y_{ii} = -\lambda$ and $Y_{ij} = \lambda$. Thus, $\y\bB$ has the form
\begin{align}\label{u_transform}
    \y\bB = (y_1,\dots,y_{i-1}, y_i - \lambda y_i,y_{i+1},\dots,y_{j-1},
    y_j + \lambda y_i, y_{j+1},\dots,y_n).
\end{align}
Note that the matrix $\bB=[B_{\ell m}]$ is upper triangular and  row-stochastic, that is,
\begin{align}\label{eq:U}
    &B_{\ell m}\geq 0 \ \mbox{ for each } \ell,m,\\
    &B_{\ell m}=0 \ \mbox{ for } \ell{>}m, \\
    &\sum_{m=1}^n B_{\ell m}=1 \ \mbox{ for all } \ell.
\end{align}

The next theorem 
relates the $\bB$-transforms to majorization.

\begin{theorem}
    \label{th:u_transforms}
    Let $\x,\y\in \mathbb{R}_+^n$. It holds that $\x \prec \y$ if, and only if, $\x$ can be derived from $\y$ by the successive applications of a finite number of $\bB$-transforms.
\end{theorem}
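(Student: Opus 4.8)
The plan is to mirror the structure of the proof of Theorem~\ref{th:l_transforms}, working with the $\bB$-transforms of~(\ref{u_transform}) in place of the $\bA$-transforms. For the forward implication, assume $\x\prec\y$ with $\x\neq\y$. Since a $\bB$-transform moves mass from a lower-indexed component to a higher-indexed one, it \emph{decreases} partial sums, so it is the natural tool to push $\y$ down to $\x$. I would first locate the smallest index $i$ with $y_i>x_i$ (this exists because $\x\prec\y$, $\x\ne\y$, and the two vectors have equal total sum) and the smallest index $j>i$ with $y_j<x_j$ (this exists by the equality of total sums together with $\sum_{\ell\le i}y_\ell>\sum_{\ell\le i}x_\ell$). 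Set $\delta=\min\{y_i-x_i,\,x_j-y_j\}$ and apply the $\bB$-transform with $\lambda=\delta/y_i$, producing
\begin{equation*}
\y' = (y_1,\dots,y_{i-1},\,y_i-\delta,\,y_{i+1},\dots,y_{j-1},\,y_j+\delta,\,y_{j+1},\dots,y_n).
\end{equation*}

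The key claim is that $\x\prec\y'$, i.e.\ $\sum_{\ell=1}^{m}x_\ell\le\sum_{\ell=1}^{m}y'_\ell$ for all $m$ (with equality at $m=n$), and that $\y'$ agrees with $\x$ in at least one more coordinate than $\y$ does. The partial-sum check splits into the same ranges as before: for $m<i$ the sums $\sum_{\ell\le m}y'_\ell=\sum_{\ell\le m}y_\ell$ are unchanged; for $m=i$ we use $y_i-\delta\ge x_i$; for $i<m<j$ we combine $y_i-\delta\ge x_i$ with the fact that $y_\ell\ge x_\ell$ for $i<\ell<j$ (which holds because $j$ was the \emph{smallest} index greater than $i$ with $y_j<x_j$); for $m\ge j$ the total over the first $m$ coordinates is again $\sum_{\ell\le m}y_\ell\ge\sum_{\ell\le m}x_\ell$ since the transform only redistributed $\delta$ between positions $i$ and $j\le m$; and at $m=n$ equality is preserved. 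Because $\delta$ equals one of $y_i-x_i$ or $x_j-y_j$, either $y'_i=x_i$ or $y'_j=x_j$, so the number of matched coordinates strictly increases. Iterating this at most $n-1$ times yields $\x$ from $\y$ by finitely many $\bB$-transforms; one subtlety worth noting is that the coordinates of the intermediate vectors stay in nonincreasing order throughout, exactly as in the proof of Theorem~\ref{th:l_transforms}, so the condition~(\ref{eq:x<y}) can be applied directly without reordering.

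For the converse, suppose $\x=\y\bB_1\bB_2\cdots\bB_k$. Each individual $\bB$-transform only \emph{decreases} (weakly) every partial sum while preserving the total sum, as is clear from~(\ref{u_transform}); hence, passing to nonincreasing rearrangements and using the characterization~(\ref{eq:x<y}), each step satisfies $(\y_{t}\bB_{t+1})^{\downarrow}\prec \y_t^{\downarrow}$, so that
\begin{equation*}
\x=\x^{\downarrow}\prec\cdots\prec(\y\bB_1)^{\downarrow}\prec\y^{\downarrow}=\y,
\end{equation*}
and transitivity of $\prec$ gives $\x\prec\y$. The main obstacle is the bookkeeping in the forward direction—verifying~(\ref{eq:partial_majorization}) for $\y'$ across all the index ranges and, in particular, confirming that applying a $\bB$-transform never breaks the nonincreasing ordering of the components—but this is routine and completely parallel to the argument already carried out for Theorem~\ref{th:l_transforms}.
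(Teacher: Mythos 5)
Your overall strategy matches the paper's, but two specific steps do not hold as written. First, in the forward direction you choose the \emph{smallest} index $i$ with $y_i>x_i$ (copying the $\bA$-transform proof) and then assert that the intermediate vectors stay in nonincreasing order. That assertion is false for this choice: take $\y=(0.5,0.4,0.1)$ and $\x=(0.35,0.35,0.3)$, so $\x\prec\y$; your rule gives $i=1$, $j=3$, $\delta=0.15$, and $\y'=(0.35,0.4,0.25)$, which is not nonincreasing. The paper avoids this by taking the \emph{largest} index $j$ with $y_j>x_j$: then every component strictly between the two chosen indices satisfies $y_\ell=x_\ell$, which both preserves the ordering and is exactly what is used in the middle-range partial-sum estimate. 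Your partial-sum inequalities themselves do survive with the smallest-index choice (they only use $y_\ell\ge x_\ell$ on the middle range, never the ordering of $\y'$), so the argument is repairable --- either switch to the largest-index choice, or drop the ordering claim and check that only prefix-sum inequalities are ever invoked for the unordered intermediate vectors --- but as written the justification for iterating rests on a false statement that you flag as a key point.

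Second, in the converse you pass to decreasing rearrangements and claim $(\y_t\bB_{t+1})^{\downarrow}\prec\y_t^{\downarrow}$ at every step. Knowing that a $\bB$-transform weakly decreases every prefix sum of the vector \emph{in its given coordinate order} does not yield majorization of the rearrangements: applied to the (unordered) vector $(1,2)$ with $\lambda=1$, a $\bB$-transform gives $(0,3)$, and $(3,0)$ majorizes $(2,1)$ rather than the other way around. Since, as noted above, the intermediate vectors need not be ordered, this chain is not justified. The fix is immediate and is essentially what the paper does (phrased there as a contradiction argument): compose the prefix-sum monotonicity over all $k$ steps to get $\sum_{\ell\le m}x_\ell\le\sum_{\ell\le m}y_\ell$ for every $m$, with equality at $m=n$; since $\x$ and $\y$ themselves are ordered, this is precisely $\x\prec\y$, and no intermediate majorization statements are needed.
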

\begin{proof}
    Let 
    $\x=(x_1, \ldots , x_n)\prec\y=(y_1, \ldots , y_n)$, $\x \neq \y$.
    We shall prove that  $\x$ can be derived from $\y$ by the successive applications of a finite number of $\bB$-transforms.

   Let $j$ be the \textit{largest} index for which $y_j > x_j$, and let $k$ be the \textit{smallest} index greater than $j$ such that $x_k > y_k$.
    Note that such a pair $j, k$ must exist (as we argued in the proof of Theorem \ref{th:l_transforms}).
    We set the quantity $\delta$~as
    \begin{equation}
        \label{eq:deltaB}    
        \delta = \min{\{}y_j - x_j, x_k-y_k{\}}.
    \end{equation}
    
    We define a $\bB$-transform as in (\ref{defB}), with $\lambda = \delta/y_j$ and $\bY$ such that 
    \begin{equation*}
        Y_{\ell m} =\begin{cases}
            -\lambda &\qquad\mbox{ if } \ell=m=j,\\
            \lambda &\qquad\mbox{ if } \ell=j \mbox{ and } m=k,\\
            0 &\qquad\mbox{ otherwise. }
        \end{cases}
    \end{equation*}
    The application of such a matrix $\bB$ on the vector $\y$ gives the vector $\y\bB=\y'$ with components
    \begin{align}\label{eq:y'}
        \y' = (y_1,\dots,y_{j-1},y_j -\delta, y_{j+1},\dots,y_{k-1},y_k +\delta, y_{k+1},\dots,y_n).
    \end{align}
    Let us show that the new vector $\y'$ still majorizes $\x$.
    
    From (\ref{eq:y'}) and since $\x \prec \y$, we get
    \begin{equation}\label{eq:passouno_B}
        \sum_{i=1}^{\ell} x_i \leq \sum_{i=1}^{\ell} y_i = \sum_{i=1}^{\ell} y'_i, \hspace{0.5cm}\ell=1,\dots,j-1.
    \end{equation}
    From (\ref{eq:deltaB}) {and (\ref{eq:y'})}, we know that 
    $$
        y'_j \geq x_j.
    $$
    Furthermore, by definition, the index $j$ is the \textit{largest} index such that $y_j > x_j$, and $k$ {is} the \textit{smallest} index {greater than $j$} such that $y_k< x_k$. It follows that  
    \begin{equation}
        \label{eq:y'_from_j+1_to_k-1}
        y'_{\ell} = y_{\ell} = x_{\ell},\hspace{1cm} \ell=j+1,\dots,k-1.
    \end{equation}
    Thus, from (\ref{eq:passouno_B}), we obtain that for each $\ell=j,\dots,k-1$, it holds that  
    \begin{align}\label{eq:passodue_B}
        \sum_{i=1}^{\ell} y'_i = \sum_{i=1}^{j-1} y'_i + y'_j  + \sum_{i=j+1}^{\ell} y'_i &\geq \sum_{i=1}^{j-1} x_i + y'_j  + \sum_{i=j+1}^{\ell} y'_i\nonumber\\
        &\geq \sum_{i=1}^{j-1} x_i + x_j + \sum_{i=j+1}^{\ell} y'_i\hspace{0.5cm} \mbox{(since $y'_j \geq x_j$)}\nonumber\\
        &{=} \sum_{i=1}^{j-1} x_i + x_j + \sum_{i=j+1}^{\ell} x_i\hspace{0.5cm}\mbox{(from (\ref{eq:y'_from_j+1_to_k-1}))}\nonumber\\
        &=\sum_{i=1}^{\ell} x_i.
    \end{align}
    From (\ref{eq:y'}), we get
    \begin{align}\label{eq:passotre_B}
        \sum_{i=1}^k y'_i &= \sum_{i=1,i\neq j,k} y_i + (y_j-\delta) + (y_k +\delta)\nonumber\\
        &= \sum_{i=1}^k y_i \geq \sum_{i=1}^k x_i\hspace{0.5cm} \mbox{(since $\x\prec \y$).}
    \end{align}
    Finally, since $\x \prec \y$, we have that
    \begin{equation}
    \label{eq:passoquattro_B}
    \begin{cases}
            \displaystyle \sum_{i=1}^\ell y'_i = \sum_{i=1}^{\ell} y_i \geq \sum_{i=1}^\ell x_i, \hspace{0.5cm} \ell=k+1,\ldots,n-1, \\
         \displaystyle\sum_{i=1}^n y'_i = \sum_{i=1}^{n} y_i = \sum_{i=1}^n x_i.    
        \end{cases}
    \end{equation}
    Therefore, from (\ref{eq:passouno_B}) and (\ref{eq:passodue_B})--(\ref{eq:passoquattro_B}), we have that $\x \prec \y'$.

    Notice that if $\delta = y_j - x_j$, then $y'_j$ is equal to $x_j$; equivalently, if $\delta = x_k - y_k$, then $y'_k$ is equal to $x_k$. Moreover, since each $\bB$-transform preserves the majorization, we can iterate the process starting from $\y'=\y\bB$. It follows that $\x$ can be derived from $\y$ by the application of a finite number of $\bB$-transforms.
    
    \smallskip
    Let us now prove the converse part of the theorem. 
    We prove it by contradiction. Hence, we assume that $\x \nprec \y$, and show that if $\x$ can be derived from $\y$ by the successive applications of a finite number of $\bB$-transforms, we get a contradiction.

    Since $\x \nprec \y$, there exists an index $\ell \in \{1,\dots,n\}$, such that
        \begin{equation}
            \sum_{i=1}^{\ell} y_i < \sum_{i=1}^{\ell} x_i.
        \end{equation}
        Moreover, by definition of $\bB$-transform (\ref{defB}), the quantity $\lambda y_j$ can be moved between two components $y_j$ and $y_k$ of $\y$, with $j > k$, only from $y_j$ to $y_k$. Therefore, the sum of the first $\ell$ components of $\y$ cannot be increased in any way through $\bB$-transforms. Consequently, this leads to a contradiction, because not all components $y_1,\dots,y_{\ell}$ can be transformed into their respective components of $\x$. Thus, it must hold that $\x \prec \y$.
\end{proof}

\begin{corollary}\label{cor:u_transforms}
    Let $\x, \y \in \mathbb{R}_+^n$. If $x\prec y$, then $\x$ can be derived from $\y$ by the application of at most $n-1$ $\bB$-transforms.
\end{corollary}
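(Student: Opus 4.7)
The plan is to mirror the argument used for Corollary \ref{lemma:n_minus_1}, adapting it from $\bA$-transforms acting on $\x$ to $\bB$-transforms acting on $\y$. The key observation established in the proof of Theorem \ref{th:u_transforms} is already the crucial one: by the choice $\delta=\min\{y_j-x_j,\, x_k-y_k\}$, after one application of a $\bB$-transform at least one coordinate of the new vector $\y'=\y\bB$ becomes equal to the corresponding coordinate of $\x$ (namely $y'_j=x_j$ if $\delta=y_j-x_j$, or $y'_k=x_k$ if $\delta=x_k-y_k$).

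First I would iterate this procedure, producing a sequence
\[
\y_0=\y,\ \y_1=\y_0\bB_1,\ \y_2=\y_1\bB_2,\ \ldots,\ \y_t=\y_{t-1}\bB_t=\x,
\]
where each $\bB_s$ is a $\bB$-transform constructed exactly as in the proof of Theorem \ref{th:u_transforms}. By that theorem, majorization is preserved at every step, so each intermediate $\y_s$ still satisfies $\x\prec \y_s$, and the construction terminates precisely when $\y_s=\x$. I would then let $m_s$ denote the number of indices $i$ for which the $i$-th coordinate of $\y_s$ equals $x_i$; by the observation above, $m_{s}\geq m_{s-1}+1$ at each step.

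Next I would argue that the final step is in fact doubly effective. Since $\sum_{i=1}^n (\y_{t-1})_i=\sum_{i=1}^n x_i$ (both partial sums agree with those of $\y$ in total), and since $\y_{t-1}$ and $\x$ agree in all but possibly two coordinates (namely the pair $(j,k)$ chosen for $\bB_t$), the equality of totals forces $\bB_t$ to simultaneously equalize both of the affected coordinates to their targets in $\x$. Hence $m_t\geq m_{t-1}+2$, while all earlier steps contribute at least one new match apiece.

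Starting from $m_0\geq 0$ matches and needing to reach $m_t=n$, the monotonicity $m_s\geq m_{s-1}+1$ for $s<t$ combined with $m_t\geq m_{t-1}+2$ yields $t\leq n-1$. The main obstacle, if any, is making precise that the last transform acts on the final pair of mismatched coordinates and thus equalizes both simultaneously; this relies on the conservation of the total sum throughout the process, exactly as in the proof of Corollary \ref{lemma:n_minus_1}.
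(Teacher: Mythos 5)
Your proposal is correct and follows essentially the same route as the paper's own proof: each $\bB$-transform equalizes at least one component of the current vector to the corresponding component of $\x$, and conservation of the total sum forces the last transform to equalize both affected components, giving the bound of at most $n-1$ transforms. Your explicit counting with the quantities $m_s$ merely formalizes the paper's terser statement of the same argument.
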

\begin{proof}
    In the proof of Theorem \ref{th:u_transforms}, we have shown 
    that the application of each $\bB$-transform 
    equalizes at least one component 
    of the intermediate vectors $\y_j=\y_{j-1}\bB_j$ 
    to a component of $\x$. Observe that since all vectors appearing in the sequence of transformation from $\y$ to $\x$ have an equal sum, the last $\bB$-transform always equalizes both the affected components to the respective components of $\x$.  As a result, $\x$ can be obtained  by the application of
     at most $n-1$ \ $\bB$-transforms.
\end{proof}

With the same technique of Lemma \ref{lemma:matrix_mul}, 
we can prove the following result.

\begin{lemma}
    \label{lemma:matrix_mul_2}
    Let $\bC$ and $\bD$ be two $n\times n$ upper triangular row-stochastic matrices. The product matrix $\bC\bD$ is still an upper triangular row-stochastic matrix.
\end{lemma}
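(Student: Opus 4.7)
The plan is to follow the template of Lemma~\ref{lemma:matrix_mul} exactly, swapping the role of lower triangularity for upper triangularity. The proof decomposes into three routine verifications: (i) the product $\bC\bD$ is upper triangular, (ii) every entry of $\bC\bD$ is nonnegative, and (iii) every row of $\bC\bD$ sums to $1$.

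For (i), I would write $(CD)_{ij} = \sum_k C_{ik} D_{kj}$ and perform a short case split on the summation index: when $i > j$, every summand vanishes, because either $k < i$ (forcing $C_{ik}=0$ by upper triangularity of $\bC$) or $k \geq i > j$ (forcing $D_{kj}=0$ by upper triangularity of $\bD$). Part (ii) is immediate, since each entry of $\bC\bD$ is a sum of products of nonnegative reals.

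For (iii), the computation from the proof of Lemma~\ref{lemma:matrix_mul} transfers verbatim: swap the order of summation in $\sum_{j} (CD)_{ij} = \sum_{j}\sum_{k} C_{ik} D_{kj}$, factor out $C_{ik}$, apply $\sum_j D_{kj} = 1$, and finally apply $\sum_k C_{ik} = 1$. This step uses only the row-stochastic property and is entirely insensitive to the triangular pattern of the matrices.

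There is no real obstacle; the only place where the upper (as opposed to lower) triangular hypothesis actually enters is the index bookkeeping in step (i), and that bookkeeping is essentially symmetric to the lower triangular case already treated in Lemma~\ref{lemma:matrix_mul}. Given this, one could reasonably shorten the exposition in the paper by stating the proof as a one-line remark that the argument of Lemma~\ref{lemma:matrix_mul} applies mutatis mutandis once (i) has been checked.
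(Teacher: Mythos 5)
Your proof is correct and follows exactly the route the paper intends: the paper omits the argument, stating only that Lemma~\ref{lemma:matrix_mul_2} follows ``with the same technique'' as Lemma~\ref{lemma:matrix_mul}, and your write-up supplies precisely the missing details, with the index bookkeeping in step (i) (for $i>j$, either $k<i$ kills $C_{ik}$ or $k\geq i>j$ kills $D_{kj}$) being the only point where the triangularity hypothesis is actually used. Nothing further is needed.
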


The following theorem characterizes majorization in terms of the upper triangular row-stochastic matrix.

\begin{theorem}
    \label{th:upper_triangular}
    Let $\x, \y \in \mathbb{R}_+^n$. It holds that $\x \prec \y$ if, and only if, there exists an upper triangular row-stochastic matrix $\bU$ such that $\x = \y\bU$.
\end{theorem}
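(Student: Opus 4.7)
The plan is to mirror the structure of the proof of Theorem \ref{th:lower_triangular}, combining the $\bB$-transform characterization (Theorem \ref{th:u_transforms}) with the closure property of upper triangular row-stochastic matrices under multiplication (Lemma \ref{lemma:matrix_mul_2}) for the forward direction, and then carrying out a direct index computation for the converse.

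For the ``only if" direction, suppose $\x \prec \y$. By Theorem \ref{th:u_transforms}, there is a finite sequence of $\bB$-transforms $\bB_1, \dots, \bB_t$ such that $\x = \y\bB_1\bB_2\cdots\bB_t$. Each $\bB_i$ is upper triangular and row-stochastic by its definition in (\ref{defB}) and the properties listed in (\ref{eq:U}). Applying Lemma \ref{lemma:matrix_mul_2} inductively, the product $\bU = \bB_1\bB_2\cdots\bB_t$ is an upper triangular row-stochastic matrix, and by construction $\x = \y\bU$.

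For the ``if" direction, assume $\x = \y\bU$ for some upper triangular row-stochastic $\bU = [U_{ij}]$. Since $U_{ji} = 0$ whenever $j > i$, we can write
\begin{equation*}
x_i = \sum_{j=1}^{n} y_j U_{ji} = \sum_{j=1}^{i} y_j U_{ji}, \hspace{0.5cm} i=1,\dots,n.
\end{equation*}
Summing over $i=1,\dots,k$ and interchanging the order of summation yields
\begin{equation*}
\sum_{i=1}^{k} x_i = \sum_{j=1}^{k} y_j \sum_{i=j}^{k} U_{ji}.
\end{equation*}
Because $\bU$ is upper triangular and row-stochastic, $\sum_{i=j}^{n} U_{ji} = \sum_{i=1}^{n} U_{ji} = 1$ for every $j$, and since all entries are nonnegative, $\sum_{i=j}^{k} U_{ji} \leq 1$. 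Since $y_j \geq 0$, this gives $\sum_{i=1}^{k} x_i \leq \sum_{j=1}^{k} y_j$ for $k=1,\dots,n-1$. For $k=n$, the inner sum equals $1$ exactly, giving $\sum_{i=1}^{n} x_i = \sum_{j=1}^{n} y_j$. Hence $\x \prec \y$.

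There is no real obstacle here: the forward direction is assembled from already-proven components, and the converse is essentially a bookkeeping exercise exploiting upper triangularity to truncate the sum for $x_i$ at $j=i$, combined with row-stochasticity to bound the resulting partial sums. The only care needed is to keep the roles of $\x$ and $\y$ straight (they are reversed relative to Theorem \ref{th:lower_triangular}, since here $\x = \y\bU$ rather than $\y = \x\bL$).
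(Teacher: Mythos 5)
Your proposal is correct and follows essentially the same route as the paper: the forward direction assembles Theorem \ref{th:u_transforms} with Lemma \ref{lemma:matrix_mul_2}, and the converse is the same interchange-of-summation argument bounding $\sum_{i=j}^{k} U_{ji}$ by $1$ via row-stochasticity and nonnegativity. Your explicit check of the equality case $k=n$ is a small (welcome) addition that the paper leaves implicit.
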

\begin{proof}
    The implication that if $\x \prec \y$, there exists an upper triangular row-stochastic matrix $\bU$ such that $\x = \y\bU$ directly derives from the results of Theorem \ref{th:u_transforms} and Lemma \ref{lemma:matrix_mul_2}. Indeed, from Theorem~\ref{th:u_transforms}, we know that $\x$ can be derived from $\y$ by the successive application of $\bB$-transforms, and from Lemma~\ref{lemma:matrix_mul_2} that the product of $\bB$-transforms is still an upper triangular row-stochastic matrix. Hence, the matrix $\bU$ obtained as the product of all $\bB$-transforms is an upper triangular row-stochastic matrix such that $\x = \y\bU$.

    \smallskip
   
    We now prove the converse implication.
  Let $\bU$ be an upper triangular row-stochastic matrix $\bU$ such that $\x = \y\bU$. 
It follows that each component of $\x$ can be written as follows:
    \begin{equation}
        \label{eq:x_j}
        x_j = \sum_{i=1}^j U_{ij}y_i,\hspace{0.5cm} j=1,\dots,n.
    \end{equation}
  By  (\ref{eq:x_j}),  we can express the sum of the first $k$ components of $\x$, with $k=1,\dots,n$, as follows:
    \begin{align*}
        x_1+\dots+x_k =& \sum_{j=1}^k \sum_{i=1}^j U_{ij}y_i\\
        =& \sum_{i=1}^k \left(\sum_{j=i}^k U_{ij}\right) y_i\\
        \leq& \sum_{i=1}^k y_i\hspace{.9cm} \mbox{(since $\sum_{j=i}^k U_{ij}\leq 1$,  given that  $U$ is row-stochastic)}.
    \end{align*}
    Thus, $\x \prec \y$.
\end{proof}

We now bound the number of nonzero elements in the matrix $\bU$ of Theorem \ref{th:upper_triangular}.
\begin{corollary}
    Let $\x,\y \in \mathbb{R_+}^n$. If $\x \prec \y$, then there exists an upper triangular row-stochastic matrix $\bU$, with at most $2n-1$ nonzero elements, such that $\x =\y\bU$.
\end{corollary}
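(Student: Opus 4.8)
The plan is to mirror the proof of the corresponding corollary of Section~\ref{sec:L} (which bounds by $2n-1$ the number of nonzero entries of the lower triangular matrix $\bL$), working now with the $\bB$-transforms and the upper triangular matrix $\bU$. By Corollary~\ref{cor:u_transforms} and Theorem~\ref{th:upper_triangular}, we may write $\bU = \bB_1\bB_2\cdots\bB_t$ with $t \leq n-1$, where $\bB_1,\dots,\bB_t$ are the $\bB$-transforms produced by the construction in the proof of Theorem~\ref{th:u_transforms}. Each $\bB_m = \bI + \bY_m$, where $\bY_m$ has its two nonzero entries in a single row $j_m$, at positions $(j_m,j_m)$ and $(j_m,k_m)$ with $k_m > j_m$; hence $\bB_1$ has at most $n+1$ nonzero entries. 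It then suffices to show that each successive right-multiplication by $\bB_m$, for $m=2,\dots,t$, creates at most one new nonzero entry, which yields a total of at most $n+1+(t-1) = n+t \leq 2n-1$.

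The crucial structural fact --- and the step I expect to be the main obstacle --- is that the sequence of rows $j_1,j_2,\dots,j_t$ on which the transforms act is non-increasing. This should follow by examining how a single $\bB$-transform modifies the current vector $\y$: since $j_m$ is chosen as the \emph{largest} index with $y_{j_m} > x_{j_m}$, and the transform decreases $y_{j_m}$ to a value still $\geq x_{j_m}$, increases $y_{k_m}$ to a value still $\leq x_{k_m}$, and leaves every coordinate of index larger than $j_m$ unchanged except $y_{k_m}$, the largest index at which the updated vector exceeds $\x$ can only remain $j_m$ or drop to something smaller. Therefore $j_1 \geq j_2 \geq \cdots \geq j_t$, and in particular $k_i > j_i \geq j_m$ for every $i < m$, so $j_m \notin \{k_1,\dots,k_{m-1}\}$.

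Granting this, the counting argument runs parallel to the lower triangular case. Set $\bC = \bB_1\cdots\bB_{m-1}$. A short induction shows that every off-diagonal nonzero entry of $\bC$ lies in a column from $\{k_1,\dots,k_{m-1}\}$; combined with $j_m\notin\{k_1,\dots,k_{m-1}\}$ and the upper triangularity of $\bC$, this gives that the $j_m$-th column of $\bC$ has its only nonzero entry on the diagonal. Consequently, in $\bC\bB_m = \bC + \bC\bY_m$ the matrix $\bC\bY_m$ is supported in row $j_m$ only, with nonzero entries possibly in columns $j_m$ and $k_m$; the column-$j_m$ contribution merely rescales the diagonal entry $C_{j_m,j_m}$ (and, in the degenerate case where the scaling factor vanishes, the count only decreases), so the sole position that can become newly nonzero is $(j_m,k_m)$. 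This establishes that each multiplication introduces at most one new nonzero entry, and the bound $2n-1$ follows.
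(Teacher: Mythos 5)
Your proposal is correct and follows essentially the same route as the paper: decompose $\bU$ into at most $n-1$ $\bB$-transforms, use the monotonicity $j_1 \geq j_2 \geq \cdots \geq j_t$ of the acting rows (which the paper asserts with less justification than you give) to conclude that the $j_m$-th column of the partial product $\bC$ is nonzero only on the diagonal, and hence that $\bC\bB_m = \bC + \bC\bY_m$ gains at most one new nonzero entry, at position $(j_m,k_m)$. The only cosmetic difference is that the paper phrases the structural fact as ``all rows of $\bC$ above row $j_m$ equal identity rows'' while you track the columns $\{k_1,\dots,k_{m-1}\}$ containing off-diagonal entries; both yield the same count $n+1+(t-1) \leq 2n-1$.
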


\begin{proof}
    From Theorem \ref{th:u_transforms} and Corollary \ref{cor:u_transforms}, we know that the matrix $\bU$, for which it holds that $\x=\y\bU$, is the product of at most $n-1$ $\bB$-transforms. 
    Let 
    $$
        \bB_1,\dots,\bB_t
    $$
    be the individual matrices associated with such $t$ $\bB$-transforms, $t<n$. By  (\ref{defB}), the matrix $\bB_1$ has $n+1$ nonzero elements.

    Let $\bC$ be the matrix equal to the product of the first $m-1$ $\bB$-transforms, $m\leq t < n$, that is, $\bC = \prod_{i=1}^{m-1} \bB_i$. We show that the product
    \begin{equation}\label{eq:product_CB}
        \bC \bB_m
    \end{equation}
    gives a matrix with only one additional nonzero element with respect to $\bC$.

    Let $j, k$ be the pair of {indices} chosen to construct $\bB_m = I + \bY_m$. From (\ref{eq:product_CB}), we get
    \begin{equation}
        \label{eq:product_CB_2}
        \bC \bB_m = \bC (\bI + \bY_m) = \bC + \bC \bY_m.
    \end{equation}
    For every $\bB$-transform, we recall that we always choose the \textit{largest} index $j$ such that $y_j>x_j$, and the \textit{smallest} index $k$ greater than $j$ for which $y_k < x_k$. Therefore, all the previous $\bB$-transforms have chosen pairs of {indices} $i, \ell$ such that $i$ is greater than or at most equal to $j$. Consequently, in the matrix $\bC$, all the rows above the $j$-th row are equal to the rows of the identity matrix. In addition, by the definition, the matrix $\bY$ has nonzero elements only in the $j$-th row, in positions $Y_{jj}$ and $Y_{jk}$, respectively. Hence, the matrix $\bC\bY_m$ has nonzero elements only in the entries $(CY_m)_{jj}$ and $(CY_m)_{jk}$. Since the element $(CY_m)_{jj}$ will be added to the diagonal element of $\bC$, the \textit{only} new nonzero element is $(CY_m)_{jk}$.
    Hence, from~(\ref{eq:product_CB_2}), we get that the product gives a matrix with only one new additional element with respect to $\bC$.

    Since each product generates a matrix with only one additional nonzero element with respect to the previous one, we obtain that the final matrix $\bU$ has at most $n+1+n-2=2n-1$ nonzero elements.
\end{proof}

We summarize our results in the next theorem, in the fashion of the classic Theorem \ref{th:equivalence_T}.
\begin{theorem}
    If $\x, \y \in \mathbb{R}_+^n$, the following conditions are equivalent:
    \begin{enumerate}
        \item[(1)] $\x \prec \y$;
        \item[(2)] $\x = \y\bU$ for some upper triangular row-stochastic matrix $\bU$;
        \item[(3)] $\x$ can be derived from $\y$ by the successive applications of at most $n-1$ $\bB$-transforms, as defined in (\ref{u_transform}).
    \end{enumerate}
\end{theorem}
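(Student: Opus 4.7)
The plan is to derive this three-way equivalence entirely from results already established in this section, so the proof amounts to bookkeeping rather than new work. Since the equivalence of three conditions reduces to showing $(1)\Leftrightarrow(2)$ and $(1)\Leftrightarrow(3)$, I would split the argument accordingly and invoke the appropriate prior results for each arm of the equivalence, mirroring the format of the analogous summary theorem at the end of Section \ref{sec:L}.

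For $(1)\Leftrightarrow(2)$, the statement is verbatim Theorem \ref{th:upper_triangular}, so I would simply cite it with no further argument. For $(3)\Rightarrow(1)$, I would observe that any sequence of at most $n-1$ $\bB$-transforms is, in particular, a \emph{finite} sequence of $\bB$-transforms, so Theorem \ref{th:u_transforms} applied to this sequence immediately yields $\x\prec\y$.

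The only place where the quantitative refinement ``at most $n-1$'' does any work is the direction $(1)\Rightarrow(3)$, and this is precisely the content of Corollary \ref{cor:u_transforms}: the constructive procedure inside the proof of Theorem \ref{th:u_transforms} equalizes at least one new coordinate of the running vector with the corresponding coordinate of $\x$ per $\bB$-transform, and the final step equalizes two, so at most $n-1$ transforms suffice. There is no genuine obstacle here; the main point of care is simply to make explicit that Theorem \ref{th:u_transforms} gives the ``finite'' qualitative version while Corollary \ref{cor:u_transforms} upgrades it to the quantitative ``at most $n-1$'' bound required by item (3). The written proof should therefore be a one-sentence appeal to Theorem \ref{th:u_transforms}, Corollary \ref{cor:u_transforms}, and Theorem \ref{th:upper_triangular}.
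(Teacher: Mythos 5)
Your proposal is correct and matches the paper's proof, which likewise disposes of the theorem by citing Theorem \ref{th:u_transforms}, Corollary \ref{cor:u_transforms}, and Theorem \ref{th:upper_triangular}; your version merely spells out which prior result handles which implication. No further comment is needed.
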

\begin{proof}
    The equivalences are a direct consequence of the results of Theorem \ref{th:u_transforms}, Corollary \ref{cor:u_transforms}, and Theorem \ref{th:upper_triangular}.
\end{proof}

Let us now see an example of the construction of the matrix $\bU$.

\begin{example}
Let $\y = [0.6, 0.2, 0.1, 0.1]$ and $\x = [0.3, 0.3, 0.3, 0.1]$ with $\x \prec \y$.

The first $\bB$-transform modifies  the elements $y_1$ and $y_2$ with $\delta = \min{\{}y_1-x_1,x_2-y_2{\}} = 0.1$ and $\lambda=\delta/y_1=1/6$. As a result, the associated matrix $\bB_1$ is the following one:
$$
    \bB_1 = \begin{bmatrix}
            1 & 0 & 0 & 0\\
            0 & 1 & 0 & 0\\
            0 & 0 & 1 & 0\\
            0 & 0 & 0 & 1
        \end{bmatrix} +
        \begin{bmatrix}
            -\frac{1}{6} & \frac{1}{6} & 0 & 0\\
            0 & 1 & 0 & 0\\
            0 & 0 & 1 & 0\\
            0 & 0 & 0 & 1
        \end{bmatrix} =
        \begin{bmatrix}
            \frac{5}{6} & \frac{1}{6} & 0 & 0\\
            0 & 1 & 0 & 0\\
            0 & 0 & 1 & 0\\
            0 & 0 & 0 & 1
        \end{bmatrix}
$$
and $\y^1 {= \y \bB_1} = [0.5, 0.3, 0.1, 0.1]$.

The second $\bB$-transform affects  the elements $y^1_1$ and $y^1_3$ with $\delta= \min{\{}y^1_1-x_1,x_3-y^1_3{\}} = 0.2$ and $\lambda=\delta/y^1_1=2/5$. Hence,
the matrix $\bB_2$ is the following one
$$
    \bB_2 = \begin{bmatrix}
            1 & 0 & 0 & 0\\
            0 & 1 & 0 & 0\\
            0 & 0 & 1 & 0\\
            0 & 0 & 0 & 1
        \end{bmatrix} +
        \begin{bmatrix}
            -\frac{2}{5} & 0 & \frac{2}{5} & 0\\
            0 & 1 & 0 & 0\\
            0 & 0 & 1 & 0\\
            0 & 0 & 0 & 1
        \end{bmatrix} =
        \begin{bmatrix}
            \frac{3}{5} & 0 & \frac{2}{5} & 0\\
            0 & 1 & 0 & 0\\
            0 & 0 & 1 & 0\\
            0 & 0 & 0 & 1
        \end{bmatrix}
$$
and $\y^2 {= \y^1 \bB_2}= [0.3, 0.3, 0.3, 0.1] = \x$. Hence, the final matrix is:
\begin{equation*}
    \begin{aligned}
        \bU = \bB_1 \bB_2 =
        \begin{bmatrix}
            \frac{1}{2} & \frac{1}{6} & \frac{2}{6} & 0\\
            0 & 1 & 0 & 0\\
            0 & 0 & 1 & 0\\
            0 & 0 & 0 & 1
        \end{bmatrix}
    \end{aligned}
\end{equation*}
It is worth pointing out that the above example also shows that
the matrices $\bU$ of this section are not simply the inverses
of the matrices  $\bL$ of Section \ref{sec:L}.

\end{example}

%%%%%%%%%%%%%%%%%

\section{Applications}\label{appl}
We recall that a real-valued function $\phi:A\subseteq \mathbb{R}^n\longmapsto \mathbb{R}$ 
is said to be \textit{Schur-concave} 
\cite{marshall1979inequalities}  if
$$
    \x \prec \y  \implies \phi(\x) \geq \phi(\y).
$$
In the rest of this section, we will assume that the set $A$ corresponds to the
$(n-1)$-dimensional probability simplex $\cP_n$, defined as
$$\cP_n=\{\x=(x_1, \ldots , x_n) \mid x_1 \geq \dots \geq x_n >0 \mbox{ and }
\sum_{i=1}^nx_i=1\}.$$
It is well known that the Shannon entropy  $H(\x) = - \sum_{i=1}^n x_i \log_2 x_i$, is  Schur-concave  over $\cP_n$. Therefore,
for any  $\x, \y\in \cP_n$ , if $\x \prec \y$, it holds that
\begin{equation}\label{eq:simple_property}
    H(\x) \geq H(\y).
\end{equation}
The above inequality (\ref{eq:simple_property}) is widely used in information theory.
There are several improvements to the basic inequality (\ref{eq:simple_property}). 
{For instance,
the authors of \cite{Ho} proved that 
for any  $\x, \y\in \cP_n$ , if $\x \prec \y$, then it holds that
\begin{equation}\label{eq:HV}
    H(\x) \geq H(\y)+ D(\y||\x),
    \end{equation}
    where $D(\y||\x)=\sum_iy_i\log (y_i/x_i)$ is the relative entropy between $\y$ and $\x$.}

The paper \cite{cohen1993majorization} proved a different
strengthening of (\ref{eq:simple_property})  (see also Proposition A.7.e.  of \cite{marshall1979inequalities}). More precisely, 
Proposition A.7.e.  of \cite{marshall1979inequalities} states that 
if $\x, \y\in \cP_n$ and 
  $\x \prec \y$, then  it holds that
\begin{equation}\label{simpleimproved}
    H(\x) \geq \alpha(\bP)\log_2 n + (1-\alpha(\bP))H(\y) \geq H(\y),
\end{equation}
where $\bP=[P_{ij}]$ is a doubly stochastic matrix for which $\x = \y\bP$, and $\alpha(\bP)$ is the Dobrushin coefficient of ergodicity of $\bP$, defined as:
\begin{align*}
    \alpha(\bP) = \min_{\ell,m} \sum_{i=1}^n \min{\{}P_{\ell i}, P_{m i}{\}}.
\end{align*}
{It might be useful to 
recall that there are papers that intend the Dobrushin coefficient of ergodicity of $\bP$
as $1-\alpha(\bP)$.}

We show that our results from the previous sections can be used to obtain a different improvement
of the basic inequality (\ref{eq:simple_property}).
In fact, we prove the following result.

\begin{theorem}\label{th:application}
    Let  $\x, \y\in \cP_n$ and 
  $\x \prec \y$. Moreover, let 
  $\bU$ be the upper triangular matrix obtained through the sequence of $\bB$-transforms described in Theorem $\ref{th:u_transforms}$
  for which $\x = \y \bU$. If $\x \neq \y$, it holds that
    \begin{align}
        H(\x) &\geq (1-\alpha(\bU)) H(\y) + \sum_{i=1}^n x_i \log_2 \frac{1}{(\bu \bU)_i} -(1-\alpha(\bU)) \log_2 n\\
        &> \alpha(\bU) \log_2 n + (1-\alpha(\bU))H(\y),
    \end{align}
    where $\bu = (1/n,\dots,1/n)$ and $\alpha(\bU)$ is the 
    Dobrushin coefficient of ergodicity of $\bU$.
\end{theorem}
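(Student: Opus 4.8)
The plan is to derive the first displayed inequality by bounding $H(\x)$ in terms of the action of the single row-stochastic matrix $\bU$ on two carefully chosen inputs — namely $\y$ itself (producing $\x$) and the uniform vector $\bu$ — and then to show that the resulting lower bound dominates the known Dobrushin-type bound $\alpha(\bU)\log_2 n + (1-\alpha(\bU))H(\y)$. The conceptual ingredient is a decomposition of any row-stochastic matrix via its Dobrushin coefficient: writing $\alpha=\alpha(\bU)$, one can extract from $\bU$ a ``uniform part'' and a ``residual part,'' i.e.\ $\bU = \alpha\,\bone^{T}\!\bm{q} + (1-\alpha)\bR$ for a suitable probability vector $\bm{q}$ and a row-stochastic matrix $\bR$; equivalently, each row of $\bU$ majorizes (in the appropriate sense) a common floor. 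The idea is to feed this decomposition into the concavity and the grouping properties of $H$.

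First I would recall (or re-derive in one line) that for a probability vector $\x$ and a row-stochastic $\bU$ with $\x=\y\bU$, the entropy $H(\x)=H(\y\bU)$ admits the standard lower bound coming from the log-sum inequality: $H(\y\bU) \ge H(\y) \;+\; \sum_i x_i\log_2\frac{1}{(\y\bU)_i}\big/\text{(something)}$ — more precisely I would use the identity/inequality that for fixed $\bU$ the map $\y\mapsto H(\y\bU)-H(\y)$ can be controlled, together with the fact that $\bu\bU$ is the ``most spread'' output. Concretely, the term $\sum_{i=1}^n x_i\log_2\frac{1}{(\bu\bU)_i}$ appearing in the statement is exactly a cross-entropy $H(\x\,\Vert\,\bu\bU)$, and since $\bu\bU$ is a probability vector, Gibbs' inequality gives $\sum_i x_i\log_2\frac{1}{(\bu\bU)_i}\ge H(\x)$, with strict inequality unless $\x=\bu\bU$. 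The second (strict) inequality of the theorem should then follow by combining this Gibbs bound with the elementary estimate $H(\x)\le \log_2 n$ and the hypothesis $\x\ne\y$ (which forces $\alpha(\bU)<1$ and prevents the degenerate equality cases), after rearranging: $(1-\alpha)H(\y) + H(\x\Vert\bu\bU) - (1-\alpha)\log_2 n \;>\; \alpha\log_2 n + (1-\alpha)H(\y)$ reduces to $H(\x\Vert\bu\bU) > \log_2 n$, hmm — so I would instead route the strict step through $H(\x\Vert\bu\bU)\ge H(\x)$ and a separate bound $H(\x)\ge\alpha\log_2 n+(1-\alpha)H(\y)$-style argument, i.e.\ the classical chain, upgraded to strictness using $\x\neq\y$.

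For the \emph{first} inequality — the genuinely new content — I would use the Dobrushin decomposition $\bU=\alpha\,\bone^{T}\bm{q}+(1-\alpha)\bR$. Then $\x = \y\bU = \alpha\,\bm{q} + (1-\alpha)\,\y\bR$. By concavity of $H$, $H(\x)\ge \alpha H(\bm{q}) + (1-\alpha)H(\y\bR)$. Now $\y\bR$ is a probability vector obtained from $\y$ by a row-stochastic matrix, and since $\bU$ (hence each $\bB$-transform, hence $\bR$ in the relevant sense) is upper triangular row-stochastic, Theorem~\ref{th:upper_triangular} gives $\y\bR\prec\y$, so $H(\y\bR)\ge H(\y)$; thus $H(\x)\ge\alpha H(\bm q)+(1-\alpha)H(\y)$. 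It remains to identify $\alpha H(\bm q)$ with the correction term $\sum_i x_i\log_2\frac1{(\bu\bU)_i}-(1-\alpha)\log_2 n$, or more precisely to show $H(\x)\ge (1-\alpha)H(\y)+\big(\sum_i x_i\log_2\frac1{(\bu\bU)_i}-(1-\alpha)\log_2 n\big)$ directly: here I would write $\bu\bU=\alpha\bm q+(1-\alpha)\bu\bR$ and note $\bu\bR=\bu$ (since $\bR$ row-stochastic acting on uniform... — this needs $\bR$ doubly-ish stochastic, so more care: instead use $(\bu\bU)_i=\frac1n\sum_j U_{ji}$), then apply the log-sum inequality to $\sum_i x_i\log_2\frac1{(\bu\bU)_i}$ after substituting $x_i=\sum_j y_j U_{ji}$ and splitting off the $\alpha$-floor $\min_\ell U_{\ell i}$ of each column. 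The main obstacle, and where I would spend the most care, is precisely this bookkeeping: cleanly relating the column-floor quantities $\sum_i\min_\ell U_{\ell i}=\alpha$ to the vector $\bu\bU$ and pushing the log-sum/Jensen estimate through so that the $\alpha H(\bm q)$ term emerges in exactly the stated form with the $-(1-\alpha)\log_2 n$ offset; the concavity and majorization steps are routine given the earlier theorems.
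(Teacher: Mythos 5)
Your proposal has genuine gaps in both inequalities.

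For the first inequality: once rearranged, it is exactly the strong data-processing inequality $D(\x\Vert\bu\bU)=D(\y\bU\Vert\bu\bU)\le(1-\alpha(\bU))\,D(\y\Vert\bu)$ applied to the input pair $(\y,\bu)$; the paper simply invokes the Cohen--Derriennic--Zbaganu contraction result (Lemma \ref{lemma:relative_entropy}) for the row-stochastic, column-allowable matrix $\bU$ and expands the two divergences. Your route via the decomposition $\bU=\alpha\,\mathbf{1}^{T}\mathbf{q}+(1-\alpha)\mathbf{R}$ and concavity of $H$ yields $H(\x)\ge\alpha H(\mathbf{q})+(1-\alpha)H(\y\mathbf{R})$, but (i) $\mathbf{R}$ need not be upper triangular (the rank-one part is not), so Theorem \ref{th:upper_triangular} does not give $\y\mathbf{R}\prec\y$, and (ii) $\alpha H(\mathbf{q})$ is not the correction term $\sum_i x_i\log_2\frac{1}{(\bu\bU)_i}-(1-\alpha(\bU))\log_2 n$, which depends on $\x$ and on the column sums of $\bU$, not only on the matrix decomposition. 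You flag this bookkeeping as the main obstacle but do not resolve it, and it is not resolvable in the form you set it up.

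For the second, strict inequality: as you correctly observe, it reduces to $\sum_i x_i\log_2\frac{1}{(\bu\bU)_i}>\log_2 n$. Your fallback via Gibbs' inequality, $\sum_i x_i\log_2\frac{1}{(\bu\bU)_i}\ge H(\x)$, cannot work, because $H(\x)\le\log_2 n$: it bounds the quantity from below by something that is itself at most $\log_2 n$. Moreover the claim is false for a general row-stochastic $\bU$ with $\x=\y\bU$; if $\bU$ were doubly stochastic then $\bu\bU=\bu$ and the left side equals $\log_2 n$ exactly. The strict inequality genuinely uses the structure of the $\bB$-transforms: each step moves column mass from a column $j$ to a column $k>j$, so that $\sum_t U_{tj}<1<\sum_t U_{tk}$ while the ordering of $\x$ gives $x_j\ge x_k$, and per transform one collects a contribution $x_j\log_2\frac{1}{1-\lambda}+x_k\log_2\frac{1}{1+\lambda}\ge x_k\log_2\frac{1}{1-\lambda^2}>0$; an induction over the transforms (the paper's Lemma \ref{lemma:>log_n}) accumulates these into the strict bound. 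This structural lemma is the real content of the theorem and is missing from your proposal.
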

To prove the theorem, we need some intermediate results.
We recall that a matrix $\bC$ is said to be column-allowable if each column contains at least one positive element \cite{cohen1993majorization}.
From \cite{cohen1993majorization}[Thm. 1.4], we obtain the following Lemma\footnote{Theorem 1.4 of  \cite{cohen1993majorization} is for general $f$-divergences.}.
\begin{lemma}\label{lemma:relative_entropy}
    Let $\bC\in \mathbb{R}_+^{n \times n}$ be a row-stochastic and column-allowable matrix and let $\x, \y\in \cP_n$, then
    \begin{equation}
        D(\x \bC \Vert \y \bC) \leq (1-\alpha(\bC))D(\x \Vert \y),
    \end{equation}
    where $D(\x \Vert \y) = \sum_{i=1}^n x_i \log _2(x_i/y_i)$ is the relative entropy between $\x$ and $\y$.
\end{lemma}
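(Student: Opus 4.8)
The statement is the strong data-processing inequality for relative entropy: the Kullback--Leibler divergence contracts under the stochastic map $\bC$ by the Dobrushin ergodicity coefficient $1-\alpha(\bC)$, which equals $\max_{\ell,m}\frac12\sum_i|C_{\ell i}-C_{m i}|$, the largest total-variation distance between two rows of $\bC$. The route I would take is to recognize it as a specialization of the cited result: $D(\x\Vert\y)=\sum_i x_i\log_2(x_i/y_i)$ is precisely the $f$-divergence of $\x$ from $\y$ for the convex function $f(t)=t\log_2 t$ (which satisfies $f(1)=0$), so the Lemma is the case $f(t)=t\log_2 t$ of \cite{cohen1993majorization}, Theorem~1.4. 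The only things to check before invoking it are routine: $f$ is admissible (convex on $(0,\infty)$, vanishing at $1$); row-stochasticity of $\bC$ guarantees that $\x\bC$ and $\y\bC$ are again probability vectors; and column-allowability of $\bC$, together with $\y\in\cP_n$ having all coordinates strictly positive, guarantees that $\y\bC$ has all coordinates strictly positive, so that the divergence $D(\x\bC\Vert\y\bC)$ is finite and the statement is not vacuous.

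If a self-contained argument is wanted, the standard approach to such contraction estimates is via the common-part structure underlying $\alpha(\bC)$. For any two rows $C_{\ell\cdot}$ and $C_{m\cdot}$ of $\bC$ one has $\sum_i\min\{C_{\ell i},C_{m i}\}\ge\alpha(\bC)$, so each of the two rows can be written as $\alpha(\bC)$ times one and the same probability vector plus $(1-\alpha(\bC))$ times a ``residual'' probability vector. One couples $\x$ and $\y$ (so that when the rows of $\bC$ are averaged against $\x$ and against $\y$ only pairs of rows ever get compared), substitutes this decomposition, and uses the joint convexity of $D(\cdot\Vert\cdot)$ on mixtures to absorb the shared mass; the shared component is literally identical in $\x\bC$ and in $\y\bC$, and convexity on mixtures turns its weight into the factor $1-\alpha(\bC)$ multiplying $D(\x\Vert\y)$. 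A convenient way to organize the accounting is the exact chain-rule identity
\[
D(\x\Vert\y)=D(\x\bC\Vert\y\bC)+\sum_{j=1}^n(\x\bC)_j\, D(p_j\Vert q_j),
\]
where $p_j(\ell)=x_\ell C_{\ell j}/(\x\bC)_j$ and $q_j(\ell)=y_\ell C_{\ell j}/(\y\bC)_j$ are the posterior distributions; this reduces the claim to the lower bound $\sum_{j=1}^n(\x\bC)_j\, D(p_j\Vert q_j)\ge\alpha(\bC)\,D(\x\Vert\y)$.

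The delicate point, and the reason I would prefer to cite Theorem~1.4 rather than reproduce its proof, is obtaining the \emph{sharp} constant. Extracting a common component from \emph{all} rows of $\bC$ at once only produces the weaker contraction factor $1-\sum_i\min_\ell C_{\ell i}$, which can be strictly larger than $1-\alpha(\bC)$; bringing the factor down to $1-\alpha(\bC)$ --- a minimum over \emph{pairs} of rows, not over all rows simultaneously --- forces the pairwise coupling and the pair-by-pair tracking of the relative-entropy cost described above. That bookkeeping, rather than any single inequality, is the technical heart of the matter, and it is exactly what \cite{cohen1993majorization}, Theorem~1.4 (stated there for general $f$-divergences), carries out; hence I would invoke that theorem, as above.
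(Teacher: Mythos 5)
Your proposal is correct and takes essentially the same route as the paper: the paper offers no independent proof of this lemma, deriving it directly from Theorem~1.4 of \cite{cohen1993majorization} (stated there for general $f$-divergences) exactly as you do by specializing to $f(t)=t\log_2 t$. Your additional remarks on admissibility, non-vacuousness under column-allowability, and the sketch of the pairwise-coupling argument go beyond what the paper records but do not change the substance.
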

We note that Lemma \ref{lemma:relative_entropy} is a classical example of
the role of contraction coefficients. Contraction coefficients
are important 
quantities in strong data-processing inequalities \cite{Sa}.

By exploiting the knowledge of the structure 
of the matrices $\bU$  of Section \ref{sec:upper}, we obtain the following result.

\begin{lemma}\label{lemma:>log_n}
    Let $\x, \y\in \cP_n$, such that   $\x \prec \y$.
    Moreover,  let $\bU$ be the upper triangular 
    matrix obtained through the sequence of $\bB$-transforms described in Theorem $\ref{th:u_transforms}$ such that $\x = \y \bU$. If $\x \neq \y$, it holds that
    \begin{equation}
        \sum_{i=1}^n x_i \log_2 \frac{1}{(\bu \bU)_i} > \log_2 n.
    \end{equation}
\end{lemma}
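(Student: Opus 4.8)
The plan is to work with the probability vector $\z := \bu\bU = (z_1,\dots,z_n)$ and reduce the claim to two soft facts about $\z$ together with Jensen's inequality. First I would record that $\z$ is a probability vector: since $\bU$ is row-stochastic, $\sum_{i=1}^n z_i=\sum_i\sum_\ell u_\ell U_{\ell i}=\sum_\ell u_\ell\sum_i U_{\ell i}=\sum_\ell u_\ell=1$. If some $z_i=0$ then, as $x_i>0$ on $\cP_n$, the left-hand side of the asserted inequality equals $+\infty$ and there is nothing to prove, so I may assume $z_i>0$ for all $i$. Next, the purely algebraic computation in the converse part of the proof of Theorem~\ref{th:upper_triangular} — namely $\sum_{i=1}^k(\y\bU)_i = \sum_{i=1}^k\big(\sum_{j=i}^k U_{ij}\big)y_i \le \sum_{i=1}^k y_i$, which never uses the ordering of the coordinates — applied with $\y$ replaced by $\bu$ gives $\sum_{i=1}^k z_i \le k/n$ for every $k=1,\dots,n$. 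Note that one cannot instead quote Theorem~\ref{th:upper_triangular} itself: $\bu$ is the least element of $\prec$ on $\cP_n$, so $(\bu\bU)^{\downarrow}\prec\bu$ would force $\bu\bU=\bu$, which is false in general.

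The second fact I would establish is that $\z\neq\bu$. Indeed, if $\bu\bU=\bu$ then every column of $\bU$ sums to $1$, so $\bU$ is doubly stochastic; but an upper triangular doubly stochastic matrix is the identity (its first column is zero below the diagonal and sums to $1$, forcing $U_{11}=1$ and hence making the first row $(1,0,\dots,0)$; iterating on the remaining principal submatrix gives $\bU=\bI$), whence $\x=\y\bU=\y$, contradicting $\x\neq\y$.

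With these in hand, I would rewrite the target as $\sum_{i=1}^n x_i\log_2(n z_i)<0$, using $\sum_i x_i=1$, and prove it in two steps. Since $\log_2$ is strictly concave and $\sum_i x_i=1$ with all $x_i>0$, Jensen's inequality gives $\sum_i x_i\log_2(nz_i)\le \log_2\big(n\sum_i x_iz_i\big)$, with equality only when all $z_i$ coincide, i.e. only when $\z=\bu$; as $\z\neq\bu$, the inequality is strict. It then suffices to show $\sum_i x_iz_i\le 1/n$, which is an Abel-summation estimate: writing $S_k=\sum_{i\le k}z_i$ (so $S_0=0$, $S_n=1$, $S_k\le k/n$) and using $x_1\ge\cdots\ge x_n$,
$$\sum_{i=1}^n x_iz_i = x_nS_n + \sum_{i=1}^{n-1}(x_i-x_{i+1})S_i \le x_n + \frac1n\sum_{i=1}^{n-1}i(x_i-x_{i+1}) = \frac1n\sum_{i=1}^n x_i = \frac1n,$$
the last equality being a routine rearrangement. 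Combining the two steps yields $\sum_i x_i\log_2(n z_i)<\log_2 1=0$, i.e. $\sum_i x_i\log_2\frac1{z_i}>\log_2 n$.

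As for the main obstacle: the computations here are all light, and the only point that genuinely needs attention is the \emph{strictness}. The tempting approach is to chase strictness through the individual $\bB$-transforms (showing $\sum_{i\le k}z_i<k/n$ for some $k$), but this is unnecessarily awkward; the clean route, which I would take, is to let strictness come from the strict concavity of the logarithm, so that it only remains to certify $\z\neq\bu$ — and that follows at once from the rigidity of triangular doubly stochastic matrices together with $\x\neq\y$.
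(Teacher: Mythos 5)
Your proof is correct, and it takes a genuinely different route from the paper's. The paper argues factor by factor: it writes $\bU=\prod_i\bB^i$, shows that already the first factor gives $\sum_i x_i\log_2\frac{1}{(\bu\bB^1)_i}>\log_2 n$, and then shows that multiplying each further $\bB^\ell$ strictly increases the quantity, which requires careful bookkeeping of which rows of the partial products $\bC=\prod_{i<\ell}\bB^i$ are still identity rows and which column sums equal $C_{jj}$. You instead ignore the factorization entirely and use only two structural facts about the final matrix --- the unsorted partial-sum bound $\sum_{i\le k}(\bu\bU)_i\le k/n$ (correctly extracted from the converse computation in Theorem~\ref{th:upper_triangular} rather than from its statement, since $\bu\bU$ need not be sorted) and the rigidity fact that $\bu\bU=\bu$ would force $\bU=\bI$ and hence $\x=\y$ --- and then conclude by strict Jensen plus an Abel summation using $x_1\ge\cdots\ge x_n$. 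All the individual steps check out, including the handling of strictness via $\z\ne\bu$ and the telescoping identity $x_nS_n+\sum_{i<n}(x_i-x_{i+1})S_i\le\frac1n\sum_ix_i$. Your version is shorter, and it actually establishes a slightly stronger statement: the inequality holds for \emph{any} upper triangular row-stochastic $\bU$ with $\x=\y\bU$ and $\x\ne\y$, not only the particular one produced by the $\bB$-transform algorithm. What the paper's approach buys in exchange is the chain of intermediate strict inequalities $\sum_i x_i\log_2\frac{1}{(\bu\bU)_i}>\sum_i x_i\log_2\frac{1}{(\bu\bB^1)_i}>\log_2 n$, i.e., the information that each successive transform strictly improves the bound.
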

\begin{proof}
    Let $\bU = \prod_{i=1}^s \bB^i$ be the upper triangular matrix obtained as product of the $s$ $\bB$-transforms $\bB^1,\dots,\bB^s$. To prove the lemma, we first show that 
    \begin{equation}\label{eq:primo_risultato}
        \sum_{i=1}^n x_i \log_2 \frac{1}{(\bu \bB^1)_i} > \log_2 n.
    \end{equation}
   Successively, we prove that for each $\ell=2,\dots,s-1$, it holds that
    \begin{equation}\label{eq:secondo_risultato}
        \sum_{i=1}^n x_i \log_2 \frac{1}{(\bu (\bC \bB^{\ell}))_i} -  \sum_{i=1}^n x_i \log_2 \frac{1}{(\bu \bC)_i} = \sum_{i=1}^n x_i \log_2 \frac{(\bu \bC)_i}{(\bu (\bC \bB^{\ell}))_i}> 0,
    \end{equation}
    where $\bC = \prod_{i=1}^{\ell-1} \bB^i$.
    \smallskip

    Let $j, k$ be the pair of {indices}, with $j < k$ chosen in the first $\bB$-transform $\bB^1$, then we know that 
    \begin{equation}\label{eq:def_B^1}
        B^1_{\ell m} = 
        \begin{cases}
            1-\lambda &\qquad\mbox{ if } \ell=m=j,\\
            \lambda &\qquad\mbox{ if } \ell=j \mbox{ and } m=k,\\
            1 &\qquad\mbox{ if } \ell=m\neq j\\
            0 &\qquad\mbox{ otherwise. }
        \end{cases}
    \end{equation}
    Hence, from (\ref{eq:def_B^1}) and by noticing that 
    for each $i=1, \ldots , n$, it holds that 
    $(\bu \bB^1)_i = (1/n)\sum_{\ell = 1}^n B^1_{\ell i}$,  we get the following series of equalities and inequalities:
    \begin{align}
        \sum_{i=1}^n x_i \log_2 \frac{1}{(\bu \bB^1)_i} - \log_2 n &= \sum_{i=1}^n x_i \log_2 \frac{1}{(\bu \bB^1)_i} - \sum_{i=1}^n x_i \log_2 n\nonumber\\
        &= \sum_{i=1}^n x_i \log_2 \frac{1}{n(\bu \bB^1)_i}\nonumber\\
        &= \sum_{i=1}^n x_i \log_2 \frac{1}{n \frac{1}{n}\sum_{\ell = 1}^n B^1_{\ell i}}\nonumber\\
        &=\sum_{i=1}^n x_i \log_2 \frac{1}{\sum_{\ell = 1}^n B^1_{\ell i}}\nonumber\\
        &=\sum_{i\neq j, k} x_i \log_2 \frac{1}{B^1_{ii}} + x_j \log_2 \frac{1}{B^1_{jj}}\nonumber\\
         &\ \ \ \ \ + x_k \log_2 \frac{1}{B^1_{kk} + B^1_{jk}}\hspace{0.2cm}\mbox{(from (\ref{eq:def_B^1}))}\nonumber\\
        &=x_j \log_2 \frac{1}{1-\lambda} + x_k \log_2 \frac{1}{1+\lambda}\nonumber\\
        &\geq x_k \log_2 \frac{1}{(1-\lambda)(1+\lambda)}\hspace{2cm}\mbox{(since $x_j \geq x_k$)}\nonumber\\
        &= x_k \log_2 \frac{1}{1-\lambda^2} > 0.
    \end{align}
    Thus, we have proved that (\ref{eq:primo_risultato}) holds.

\smallskip

Let $\bC = \prod_{i=1}^{\ell-1} \bB^i$ for an arbitrary $\ell \in \{2, \dots, s-1\}$. Let $j, k$ be the pair of {indices} with $j < k$ chosen in the successive $\bB$-transform $\bB^{\ell} = \bI + \bY^{\ell}$, then we know that 
\begin{equation}\label{eq:def_Y^l}
        Y^{\ell}_{t m} =\begin{cases}
            -\lambda &\qquad\mbox{ if } t=m=j,\\
            \lambda &\qquad\mbox{ if } t=j \mbox{ and } m=k,\\
            0 &\qquad\mbox{ otherwise. }
        \end{cases}
\end{equation}
Therefore, we get
\begin{equation}\label{eq:product_}
    \bC \bB^{\ell} = \bC (\bI + \bY^{\ell}) = \bC + \bC \bY^{\ell}.
\end{equation}

For every $\bB$-transform, we recall that we always choose the \textit{largest} index $j$ such that $y_j>x_j$, and the \textit{smallest} index $k$ greater than $j$ for which $y_k < x_k$. Therefore, all the previous $\bB$-transforms have chosen pairs of {indices} $t, m$ such that $t$ is greater than or at most equal to $j$. Consequently, in the matrix $\bC$, all the rows above the $j$-th row are equal to the rows of the identity matrix. 
Thus, from (\ref{eq:def_Y^l}), it follows that the only nonzero entries in the matrix $\bC \bY^{\ell}$ are $(C Y^{\ell})_{jk} = \lambda C_{jj}$ and $(C Y^{\ell})_{jj} = -\lambda C_{jj}$. Therefore, from (\ref{eq:product_}), the only entries of $\bC$ that are 
different from the corresponding entries of 
$\bC \bB^{\ell}$ are $C_{jj}$ and $C_{jk}$. In particular, in the matrix 
$\bC \bB^{\ell}$, the entry 
$C_{jj}$ of $\bC$ is decremented by $\lambda C_{jj}$ and $C_{jk}$ is incremented by $\lambda C_{jj}$. Therefore, we have
\begin{equation}\label{eq:CB_l_def}
    (CB^{\ell})_{tm} = 
    \begin{cases}
            C_{jj}-\lambda C_{jj} &\qquad\mbox{ if } t=m=j,\\
            C_{jk} + \lambda C_{jj} &\qquad\mbox{ if } t=j \mbox{ and } m=k,\\
            C_{tm} &\qquad\mbox{ otherwise. }
        \end{cases}
\end{equation}
Thus, we get 
\begin{align}\label{eq:tmp_step}
    \sum_{i=1}^n x_i \log_2 \frac{(\bu \bC)_i}{(\bu (\bC \bB^{\ell}))_i}
    =& \sum_{i=1}^n x_i \log_2 \frac{(1/n)\sum_{t=1}^n C_{ti}}{(1/n) \sum_{t=1}^ n (CB^{\ell})_{ti}}\nonumber\\
    =&\sum_{i=1}^n x_i \log_2 \frac{\sum_{t=1}^n C_{ti}}{\sum_{t=1}^ n (CB^{\ell})_{ti}}\nonumber\\
    =& \sum_{i\neq j, k} x_i \log_2 \frac{\sum_{t=1}^n C_{ti}}{\sum_{t=1}^n C_{ti}}  +x_j \log_2 \frac{\sum_{t=1}^n C_{tj}}{\sum_{t=1}^n C_{tj} - \lambda C_{jj}}\nonumber\\
    &+ x_k \log_2 \frac{\sum_{t=1}^n C_{tk}}{\sum_{t=1}^n C_{tk} + \lambda C_{jj}}\nonumber\hspace{1.5cm}\mbox{(from (\ref{eq:CB_l_def}))}\\
    =& x_j \log_2 \frac{\sum_{t=1}^n C_{tj}}{\sum_{t=1}^n C_{tj} - \lambda C_{jj}} + x_k \log_2 \frac{\sum_{t=1}^n C_{tk}}{\sum_{t=1}^n C_{tk} + \lambda C_{jj}}.
\end{align}
Observe that by construction of $\bC$, it holds that $C_{jj}\leq 1$ and that the only nonzero element in the $j$-th column is $C_{jj}$. Thus, it follows that $\sum_{t=1}^n C_{tj} = C_{jj}$. Similarly, it also holds that $C_{kk}=1$. In fact, since $y_k < x_k$, the $\bB$-transforms 
do not modify the $k$-th row (that remains equal to the 
row of the identity matrix). Thus, it follows that $\sum_{t=1}^n C_{tk} \geq 1$.
Therefore, we can rewrite (\ref{eq:tmp_step}) as follows
\begin{align}
    \sum_{i=1}^n x_i \log_2 \frac{(\bu \bC)_i}{(\bu (\bC \bB^{\ell}))_i} &= x_j \log_2 \frac{C_{jj}}{C_{jj} - \lambda C_{jj}} + x_k \log_2 \frac{\sum_{t=1}^n C_{tk}}{\sum_{t=1}^n C_{tk} + \lambda C_{jj}}\nonumber\\
    &= x_j \log_2 \frac{1}{1-\lambda} + x_k \log_2 \frac{1}{1+ \frac{\lambda C_{jj}}{\sum_{t=1}^n C_{tk}}}\nonumber\\
    &\geq x_k \log_2 \frac{1}{(1-\lambda)(1+ \frac{\lambda C_{jj}}{\sum_{t=1}^n C_{tk}})} \hspace{1.5cm} \mbox{(since $x_j \geq x_k$)}\nonumber\\
    &\geq x_k \log_2 \frac{1}{(1-\lambda)(1+\lambda)}\hspace{1.5cm} 
    \left (\mbox{since $\displaystyle 1+ \lambda \geq 1+ \frac{\lambda C_{jj}}{\sum_{t=1}^n C_{tk}}$} \right )\nonumber\\
    &> 0.
\end{align}
Thus, we proved that (\ref{eq:secondo_risultato}) holds.

Since both (\ref{eq:primo_risultato}) and (\ref{eq:secondo_risultato}) hold, we have proved the Lemma, given that
$$
    \sum_{i=1}^n x_i \log_2 \frac{1}{(\bu \bU)_i} > \sum_{i=1}^n x_i \log_2 \frac{1}{(\bu \bB^1)_i} >\log_2 n.
$$
\end{proof}

We can now prove Theorem \ref{th:application}.
\begin{proof}
    Observe that the matrix $\bU$ is a row-stochastic and column-allowable matrix. Thus, from Lemma~(\ref{lemma:relative_entropy}), we 
     obtain
    
    \begin{equation*}
        D(\y\bU \Vert \bu \bU) \leq (1-\alpha(\bU))D(\y \Vert \bu),
    \end{equation*}
    and, since $\x = \y\bU$,
    \begin{equation}\label{eq:nuova}
       D(\x \Vert \bu \bU)\leq (1-\alpha(\bU))D(\y \Vert \bu).
    \end{equation}
    Expanding the divergences in (\ref{eq:nuova}), we get
    \begin{equation}\label{eq:lemma_application}
        \sum_{i=1}^n x_i \log_2 \frac{1}{(\bu \bU)_i}-H(\x) \leq (1-\alpha(\bU))(\log_2 n - H(\y)).
    \end{equation}  
    From (\ref{eq:lemma_application}), we get the following lower bound on the entropy of $\x$:
    \begin{align}\label{eq:tmp_result}
        H(\x) \geq (1-\alpha(\bU))H(\y) + \sum_{i=1}^n x_i \log_2 \frac{1}{(\bu \bU)_i} - (1-\alpha(\bU))\log_2 n.
    \end{align}
    From Lemma \ref{lemma:>log_n}, we know that
    \begin{equation}\label{eq:lemma_recall}
        \sum_{i=1}^n x_i \log_2 \frac{1}{(\bu \bU)_i} > \log_2 n.
    \end{equation}
    Thus, by applying (\ref{eq:lemma_recall}) to (\ref{eq:tmp_result}), we finally get
    \begin{align*}
        H(\x) &\geq (1-\alpha(\bU))H(\y) + \sum_{i=1}^n x_i \log_2 \frac{1}{(\bu \bU)_i} - (1-\alpha(\bU))\log_2 n\\
        &> (1-\alpha(\bU))H(\y) + \log_2 n - (1-\alpha(\bU))\log_2 n\\
        &= (1-\alpha(\bU))H(\y) + \alpha(\bU))\log_2 n\\
        &\geq H(\y) \hspace{5cm}\mbox{(since $H(\y) \leq \log_2 n$).}
    \end{align*}\end{proof}

\section{Conclusions}
In this paper, we have introduced two novel  characterizations of
the classical concept of majorization in terms of upper triangular (resp., 
lower triangular) row-stochastic matrices.
The interesting features of our upper triangular (resp., 
lower triangular) row-stochastic matrices are that they are quite sparse 
in the sense that they have few nonzero elements this 
property might be useful in 
practical applications. Finally, we have used our new characterization of 
 majorization in terms of upper triangular row-stochastic matrices
to derive an improved entropy inequality.  
We mention that one could derive a similar (albeit \textit{non}equivalent)
improved entropy inequality by using the characterization of 
 majorization in terms of lower triangular row-stochastic matrices that 
 we have given in Section \ref{sec:L}. To do so, the way to proceed is similar 
 to the one we presented in 
 Section \ref{appl}.

\section*{Acknowledgments}
The authors would like to thank the referees and the Guest Editor Professor I. Sason
for their corrections and useful suggestions. 

This work was partially supported by project SERICS (PE00000014) under the NRRP MUR program funded by the EU-NGEU.

\end{document}